\theoremstyle{plain} 
\newtheorem{theorem}{Theorem}[section] 
\newtheorem{lemma}[theorem]{Lemma}    
\begin{document}
\newcommand{\minitab}[2][l]{\begin{tabular}{#1}#2\end{tabular}}

\def\BibTeX{{\rm B\kern-.05em{\sc i\kern-.025em b}\kern-.08em
    T\kern-.1667em\lower.7ex\hbox{E}\kern-.125emX}}


\title{Backscatter Device-aided Integrated Sensing and Communication: A Pareto Optimization Framework}

\author{Yifan~Zhang,~\IEEEmembership{Graduate Student Member,~IEEE,}
        Yu Bai,~\IEEEmembership{Graduate Student Member,~IEEE,} Shuhao~Zeng,~\IEEEmembership{Member,~IEEE,}      
        Riku Jäntti,~\IEEEmembership{Senior Member,~IEEE,}
        Zheng~Yan,~\IEEEmembership{Fellow,~IEEE,}\\
        Christos Masouros,~\IEEEmembership{Fellow,~IEEE,}
        and Zhu~Han,~\IEEEmembership{Fellow,~IEEE.}
\thanks{Y.F. Zhang, Y. Bai, and R. Jäntti are with the Department of Information and Communications Engineering, Aalto University, Espoo, 02150, Finland. (email: yifan.1.zhang@aalto.fi, yu.bai@aalto.fi, riku.jantti@aalto.fi)}
\thanks{S.H. Zeng is with the Department of Electrical and Computer Engineering, Princeton University, NJ, USA (email: shuhao.zeng96@gmail.com).}%
\thanks{Z. Yan is with the State Key Lab of ISN, School of Cyber Engineering, Xidian University, Xi'an, Shaanxi, 710026 China. (email: zyan@xidian.edu.cn, ysyangxd@stu.xidian.edu.cn)}%
\thanks{C. Masouros is with the Department of Electronic and Electrical
Engineering, University College London, London WC1E 7JE, U.K. (e-mail:
chris.masouros@ieee.org).}
\thanks{Z. Han is with the Department of Electrical and Computer Engineering, University of Houston, Houston, TX 77004 USA, and also with the Department of Computer Science and Engineering, Kyung Hee University, Seoul 446-701, South Korea (email: hanzhu22@gmail.com)}
 }

\markboth{Journal of \LaTeX\ Class Files}%
{Shell \MakeLowercase{\textit{et al.}}: A Sample Article Using IEEEtran.cls for IEEE Journals}
\maketitle

\begin{abstract}

Integrated sensing and communication (ISAC) systems potentially encounter significant performance degradation in densely obstructed urban and non-line-of-sight scenarios, thus limiting their effectiveness in practical deployments. To deal with these challenges, this paper proposes a backscatter device (BD)-assisted ISAC system, which leverages passive BDs naturally distributed in underlying environments for performance enhancement. These ambient devices can enhance sensing accuracy and communication reliability by providing additional reflective signal paths.
In this system, we define the Pareto boundary characterizing the trade-off between sensing mutual information (SMI) and communication rates to provide fundamental insights for its design. To derive the boundary, we formulate a performance optimization problem within an orthogonal frequency division multiplexing (OFDM) framework, by jointly optimizing time-frequency resource element (RE) allocation, transmit power management, and BD modulation decisions. To tackle the non-convexity of the problem, we decompose it into three subproblems, solved iteratively through a block coordinate descent (BCD) algorithm. Specifically, the RE subproblem is addressed using the successive convex approximation (SCA) method, the power subproblem is solved using an augmented Lagrangian combined water-filling method, and the BD modulation subproblem is tackled using semidefinite relaxation (SDR) methods. Additionally, we demonstrate the generality of the proposed system by showing its adaptability to bistatic ISAC scenarios and MIMO settings.
Finally, extensive simulation results validate the effectiveness of the proposed system and its superior performance compared to existing state-of-the-art ISAC schemes.

\end{abstract}

\begin{IEEEkeywords}
Integrated sensing and communication (ISAC), backscatter device (BD), OFDM, Pareto boundary.
\end{IEEEkeywords}

\section{Introduction}

Integrated sensing and communication (ISAC) has emerged as a pivotal technology for the sixth-generation (6G) wireless networks and beyond \cite{wei2023integrated,zhu2024enabling}, driven by its capability to simultaneously provide communication and radar sensing functionalities. In typical ISAC systems, base stations (BSs) utilize shared wireless resources to simultaneously detect remote targets and communicate with mobile users, significantly reducing hardware complexity while enhancing spectral and energy efficiency  \cite{9705498}. Consequently, ISAC is well-positioned to support various demanding applications in 6G networks, such as autonomous driving, virtual reality (VR), and the low altitude economy ecosystem \cite{10216343,9737357}. Nevertheless, practical ISAC deployments still encounter considerable challenges, such as high energy consumption, limited performance, and insufficient robustness in energy-constrained, urban, non-line-of-sight scenarios, which require further research.

\begin{figure}[tp]
    \centering
    \includegraphics[width=0.95\linewidth]{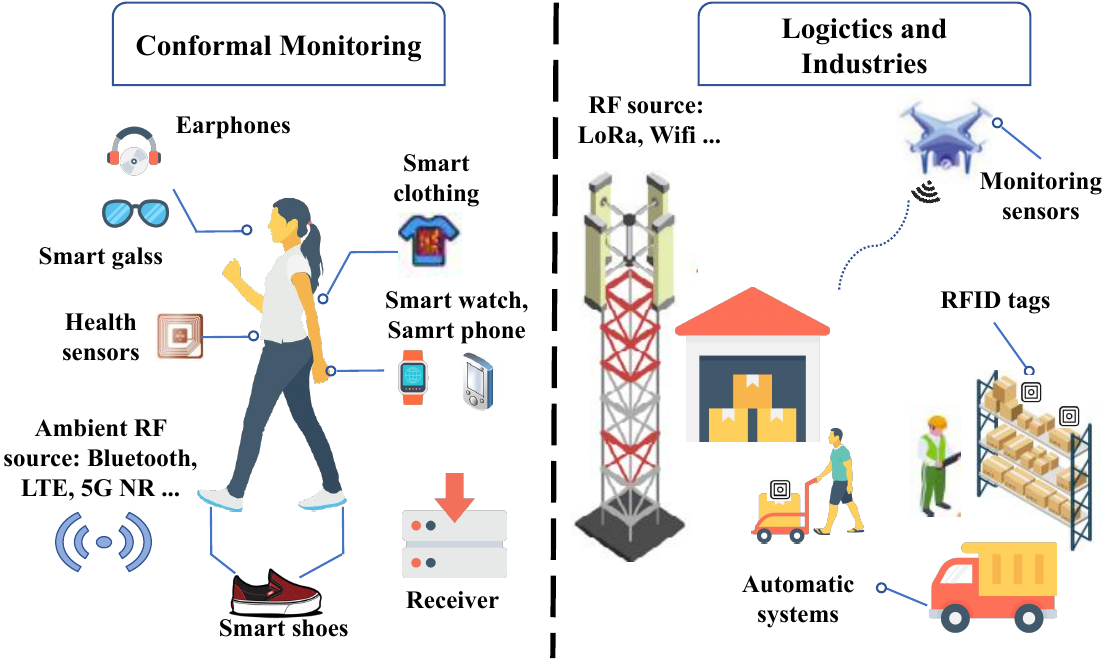}
    \vspace{-0.1cm}
    \caption{Applications of backscatter devices in the real world.}
    \label{BDs}
    \vspace{-0.7cm}
\end{figure}
To enhance ISAC performance while maintaining low cost and energy consumption, economical auxiliary devices such as reconfigurable intelligent surfaces (RISs) and backscatter devices (BDs) have been proposed to improve the received signal strength in ISAC systems.
RISs, composed of programmable metasurface arrays, intelligently shape electromagnetic waves, significantly improving signal quality, coverage, and interference mitigation, thus advancing both sensing and communication performance \cite{liu2023snr,10416996,10496515,10857301}. 
However, despite some RISs being passive or capable of energy harvesting \cite{meng2023ris}, their practical deployment is hindered by limited installation flexibility and high cost.
In contrast, BDs are passive units that modulate on incident ambient signals without generating carrier waves \cite{7948789}, thereby making them inherently energy-efficient and ideal for low-power applications. Unlike RISs, most BDs require no dedicated external power sources. Their compact and simple design makes them extremely cost-effective, with unit low costs of 7 to 15 cents (USD) \cite{8368232}, and small size within a few square centimeters \cite{abdulghafor2021recent}. As shown in Fig. \ref{BDs}, BDs have therefore been broadly deployed in many real-world applications such as conformal monitoring \cite{10130082}, logistics \cite{ahmed2024noma}, and industries \cite{10556582}. Motivated by these advantages, leveraging ambient BDs in ISAC systems offers a compelling approach to enhancing performance across various applications.

\vspace{-0.3cm}
\subsection{Related works}

Recent studies have increasingly focused on using BDs for ultra-low-power and low hardware complexity localization and communication functionalities. For instance, Ren et al. \cite{ren2023toward} leveraged ambient BDs to reduce hardware complexity and improve the accuracy for localization tasks in ISAC systems. They comprehensively characterized complex BD signal propagation environments and demonstrated that incorporating BDs for signal relaying can enhance localization accuracy and mitigate multipath interference.
Xu et al. \cite{10566596} explored joint localization and signal detection in an orthogonal frequency division multiplexing (OFDM)-based ISAC system, where ambient BDs function simultaneously as sensed targets for the BS and communication terminals for users. Given the inherently weak backscatter signals from BDs, the authors designed a specialized link-layer protocol to effectively mitigate environmental interference. Furthermore, Zhang et al. \cite{zhang2024net} proposed a practical ISAC system where BDs serve as the localized target and communication relay for users.
They designed a frame structure that effectively mitigates interference and theoretically analyzed the BD’s location detection probability and symbol error rate at the user, showing the advantages of using BDs for enhancing localization accuracy and communication reliability.

In addition, Galappaththige et al. \cite{10221890} introduced the integrated sensing and backscatter communication (ISBC) concept, where BDs simultaneously serve as the sensed targets while reflecting data signals to users for communication enhancement. They focused on fundamental sensing and communication metrics, i.e., sensing and communication rates, which are optimized by proposing an effective power allocation strategy.
Furthermore, Luo et al. \cite{luo2024isac} and Zhao et al. \cite{zhao2024b} utilized multiple-input multiple-output (MIMO) and beamforming techniques to further improve both sensing and communication capabilities. By optimizing the beamforming vectors at the transmitter, these studies have achieved substantial performance enhancements in monostatic and bistatic configurations, respectively.
\vspace{-0.3cm}
\subsection{Motivations and Contributions}
Despite the advancements presented by these schemes, several research gaps remain:

\begin{itemize}
\item First, the potential of leveraging BDs for performance
enhancement in ISAC systems is not sufficiently researched. The existing literature deliberately deploys BDs as sensing targets or communication terminals. However, the practical scenario of utilizing off-the-shelf ambient BDs distributed in the environment, as signal relays to simultaneously enhance both sensing and communication performance, has been largely overlooked. These ambient BDs naturally reflect sensing echoes and create multipath propagation \cite{9778563}, providing an opportunity to improve system performance without the need for delicate device deployment or additional power supplies.

\item Second, the information-theoretic Pareto boundary characterizing the optimal sensing-communication trade-off remains unexplored in BD-based ISAC systems. Prior studies mainly optimize sensing or communication separately, neglecting their inherent trade-offs. Additionally, most research focuses on localization accuracy rather than sensing mutual information (SMI), a fundamental metric that quantifies the maximum achievable sensing performance from an information-theoretic perspective.

\item Third, existing studies lack effective algorithms for jointly optimizing shared resources in BD-based ISAC systems under an OFDM configuration. These studies typically address isolated resource optimization problems, lacking comprehensive consideration of the joint optimization of multiple system resources, including time-frequency resource element allocation, power allocation, and discrete BD modulations. Consequently, the achievable maximum performance region for sensing and communication remains largely unexplored.

\end{itemize}

To bridge these research gaps, this paper proposes a practical BD-assisted ISAC system that leverages ambient BDs to simultaneously enhance sensing and communication performance. In this system, we first formulate a joint resource optimization problem and characterize the sensing-communication trade-off using Pareto boundaries, considering various optimal variables, including time-frequency resource elements allocation, power allocation, and discrete BD modulation decisions. To solve this complex and nonconvex problem, we propose an efficient optimization algorithm combining a block coordinate descent (BCD) framework and various nonconvex relaxation methods. Based on the Pareto front obtained, we investigate the impact of various system parameters on sensing and communication performance. The primary contributions of this paper are summarized as follows:

\begin{itemize}
\item \textbf{Proposal of a Practical BD-assisted ISAC System.} We introduce a practical system utilizing off-the-shelf BDs naturally distributed in real-world environments. By leveraging BD-reflected sensing echoes and multipath signals, the system enhances both the sensing and communication performance. These ambient BDs commonly function as wearable sensors, environmental monitors, or logistics tracking tags near typical ISAC entities, eliminating the need for dedicated deployment, active power supplies, and precise control, thereby simplifying implementation and reducing complexity.

\item \textbf{Characterization of the Pareto Boundary.} To characterize the optimal radar-communication trade-off, we formulate a Pareto boundary searching problem \cite{gao2023cooperative,10540149}, using the SMI and communication rate as fundamental metrics. We rigorously establish the compactness and normality of the achievable performance region. To determine the Pareto boundary, we design two constrained optimization problems: sensing-centric optimization and communication-centric optimization, in which time-frequency resource element allocation, power allocation, and BD modulation decision are jointly considered.

\item \textbf{Joint Multi-dimensional Resource Allocation.}
To effectively solve the Pareto optimization problem, we propose a BCD-based algorithm that decomposes it into three subproblems: resource element allocation, power allocation, and BD modulation decisions. Specifically, we apply successive convex approximation (SCA) for resource allocation, augmented Lagrangian combined with water-filling for power allocation, and semidefinite relaxation (SDR) for BD modulation, thus obtaining a near-optimal Pareto boundary. Additionally, we discuss the algorithm's adaptability to bistatic ISAC systems and MIMO settings.


\end{itemize}

Extensive simulations demonstrate that the proposed BD-assisted ISAC system significantly outperforms state-of-the-art and non-optimized schemes across various scenarios. Numerical results reveal approximately 15\% improvements in both the SMI and the communication rate compared to conventional ISAC systems without leveraging BDs, thereby validating the effectiveness and superiority of the proposed algorithm. Additionally, the proposed system achieves cost reductions ranging from 50\% to 80\% relative to a RIS-assisted benchmark, while incurring less than a 5\% performance penalty, underscoring its substantial potential for practical deployment.

\textit{Remainder}: Section~\ref{System model} describes the system model and performance metrics. Section~\ref{Problem Formulation} formulates the Pareto boundary to characterize sensing-communication trade-offs and defines corresponding optimization problems. Section~\ref{Pareto Boundary} proposes an optimization algorithm for solving these problems. Simulation results are provided in Section~\ref{performance}, and conclusions are drawn in Section~\ref{Conclusion}.

\begin{figure}[tp]
    \centering
    \includegraphics[width=0.95\linewidth]{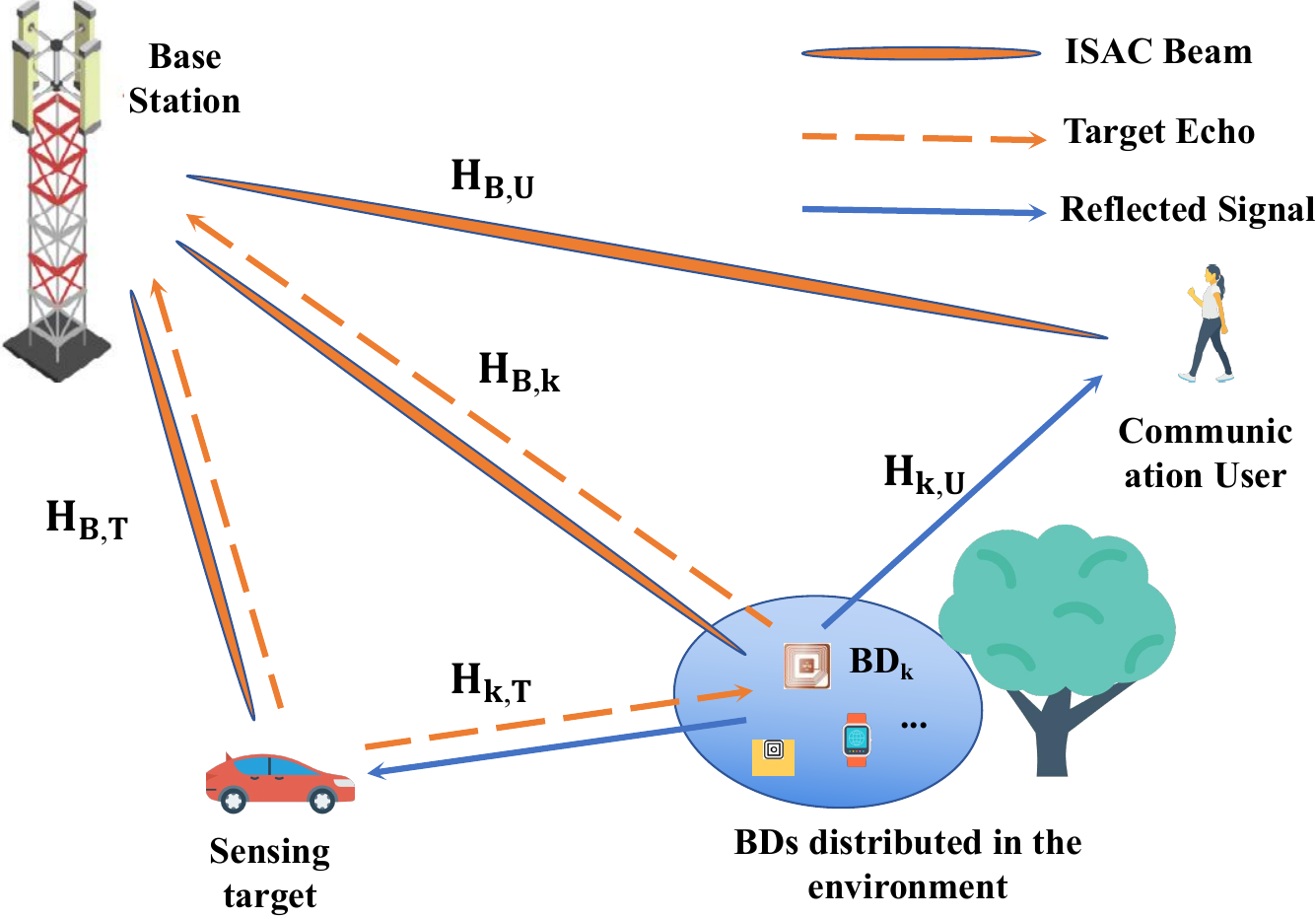}
    \caption{System model of a BD-assisted ISAC system.}
    \label{syse}
    \vspace{-0.5cm}
\end{figure}

\vspace{-0.2cm}
\section{System Model} \label{System model}
As shown in Fig. \ref{syse}, we consider a BD-assisted monostatic ISAC system comprising a full-duplex base station (BS) transmitting OFDM samples to communicate with user equipment (UE) while simultaneously sensing a target using reflected OFDM echoes. Multiple BDs, closely distributed near entities such as the BS, the UE, and the target, reflect incident signals, including the OFDM transmissions and the target echoes. The detailed transmit signal model, backscatter model, and received signal model are presented in the following subsections.

\vspace{-0.2cm}
\subsection{Transmit Signal Model}

An OFDM waveform is adopted for both radar sensing and data transmission. Specifically, a processing interval comprises a frame of $M$ consecutive OFDM symbols, each containing $N$ subcarriers. Each resource element (RE) for the $n$-th subcarrier of the $m$-th OFDM symbol is denoted by $(m,n)$ in the frequency-time domain. We define the set of OFDM symbols as $\mathcal{M} = \{0, \dots, M - 1\}$ and the subcarrier indices as $\mathcal{N} = \{0, \dots, N - 1\}$. For each symbol $m$, the subcarriers are partitioned into radar subcarriers $\mathcal{N}_{r,m}$ and communication subcarriers $\mathcal{N}_{c,m}$. 
Radar subcarriers transmit deterministic, constant-envelope pilot symbols that guarantee stable sensing performance and could also be used for channel estimation at users, whereas communication subcarriers convey random data payload for information transmission. As this paper primarily addresses the optimization of time-frequency resources, we simplify the system to a single-input single-output (SISO) configuration \cite{zhang2024cross}. The extension of the proposed approach to MIMO systems is discussed further in Section \ref{discussion}.

Based on the above definition, the transmitted signal of the BS is expressed as:

\vspace{-0.5cm}
\begin{align}
x(t) = \sum_{m \in \mathcal{M}} &\frac{p(t - t_m)}{N} 
\Bigg( \underbrace{\sum_{n \in \mathcal{N}_{r,m}} X_{r,n,m} e^{j 2 \pi f_n (t - t_m)}}_{\text{Radar subcarriers}} \nonumber \\
&+ \underbrace{\sum_{n \in \mathcal{N}_{c,m}} X_{c,n,m} e^{j 2 \pi f_n (t - t_m)}}_{\text{Communticaion subcarriers}} \Bigg),
\end{align}
where $p(t)$ is the pulse-shaping function, $X_{r,n,m}$ and $X_{c,n,m}$ are the transmitted  radar and communication signal respectively, $f_n$ is the frequency of the $n$-th subcarrier, and $t_m$ is the starting instant of the $m$-th OFDM symbol. Typically, the communication symbols are complex Gaussian distributed, $ X_{c,n,m} \sim \mathcal{CN}(0, \sigma_{\text{c}}^2)$ \cite{tse2005fundamentals}, while radar subcarrier symbols $X_{r,n,m}$ remain constant \cite{wei20225g}. A cyclic prefix (CP) of duration $T_G$ is added to each OFDM symbol to prevent inter-symbol interference, making the total OFDM symbol duration $T_o = T + T_G$.

The instantaneous powers allocated to radar and communication subcarriers are defined respectively as:
\begin{equation}
    P_r = \sum_{m \in \mathcal{M}} \sum_{n \in \mathcal{N}_{r,m}} P_{n,m}, \quad
    P_c = \sum_{m \in \mathcal{M}} \sum_{n \in \mathcal{N}_{c,m}} P_{n,m}, 
\end{equation}
where $P_{n,m} = |X_{n,m}|^2$ representing the transmit power at RE $(n,m)$ and $P_b = P_r + P_c$ denoting total transmit power of the BS.

\vspace{-0.4cm}
\subsection{Backscatter Model} \label{Backscatter Model}
We consider a practical scenario where off-the-shelf BDs are placed at fixed locations, each characterized by deterministic backscatter patterns \cite{10566596}. $K$ BDs perform phase-shift keying (PSK) backscatter, reflecting signals by altering the phase via load impedance switching on each OFDM sample \cite{qian2018iot}. The backscatter modulation symbol of the $k$-th BD for the $m$-th OFDM symbol is represented as $b_{k,m} = \alpha_k e^{j\phi_{k,m}}$, where:
\begin{equation}
\phi_{k,m} = \begin{cases}
0, & \text{for bit } ``1", \\
    \pi, & \text{for bit } ``0",
\end{cases}
\end{equation}
with attenuation coefficient $\alpha_k \in [0,1]$. Thus, the backscatter modulation of the $k$-th BD over $M$ symbols is expressed as:
\begin{equation}
\mathbf{b_k}(t) = \sum_{m=0}^{M-1} b_{k,m} \mathrm{rect}\left(\frac{t - mT_{O}}{T_{O}}\right).
\end{equation}

It is worth noting that this paper considers two BD-modulation scenarios: \textbf{\textit{Scenario 1}} (fixed modulation)~\cite{7948789}, where each BD employs a predetermined length-$M$ sequence suitable for low-cost passive BDs or slow-fading channels; and \textbf{\textit{Scenario 2}} (dynamic modulation)~\cite{7419631}, where the BS or receivers adjust BD reflection patterns in real-time. Unlike RISs, BDs operate asynchronously, reflecting signals at discrete, unsynchronized time instants governed by each device's local control or energy-harvesting schedules, rather than centralized phase alignment.

\vspace{-0.2cm}
\subsection{Receive Signal Model} 
\subsubsection{Communication Receiver}

In the system, the user can receive the communication signals directly from the BS and the signal backscattered by BDs. 
Thus, at the user end, after the CP removal, the received signal at the communication RX is given by
\vspace{-0.1cm}
\begin{equation}
   Y_{c}(m,n) = (H_{b,u}+ \sum_{k=1}^K H_{b,k} b_{k,m} H_{k,u} ) X(m,n) + V_{c}(m,n), \label{comm receive}
\end{equation}
where $V_{c}(m,n) \sim \mathcal{CN}(0,\sigma_c^2)$ is additive white Gaussian noise (AWGN). Channels $H_{b,u}$, $H_{b,k}$, and $H_{k,u}$ represent propagation from the BS to the UE, the BS to the BD, and the BD to the UE, respectively. In $(n,m)$, the frequency domain response for between device $\alpha$ and $\beta$ can be modeled as:
\vspace{-0.1cm}
\begin{equation}
H_{\alpha, \beta}(n,m) = \sum_{l=1}^{S_l} A_{l} e^{-j 2\pi f_n \tau_{l}} e^{j 2\pi f_{D,l} t_m}, \label{channel}
\end{equation}
where $(\alpha, \beta)\in \{b,u,k\}$, $S_l$ and $A_{l}$ are the number of propagation paths in the environment and the corresponding attenuation constant for the path between each scatterer, $\tau_{l}$ and $f_{D,l}$ are the delay and Doppler shift of each path.


Based on \eqref{comm receive}, the data communication rate is then given as
\vspace{-0.1cm}
{ \small
\begin{align}
C_d = \frac{\Delta f}{T_{{o}}} & \sum_{m \in \mathcal{M}} \sum_{n \in \mathcal{N}_{c,m}} \log_2 \Bigg( 1 + \frac{\Big| H_{b,u}+ \sum_{k=1}^K b_{k,m} H_{b,k} H_{k,u} \Big|^2} {\sigma_c^2}
\nonumber \\ &   \times \mathbb{E} \{|X_{c}(m,n)|^2\} \Bigg),
\end{align}
}
where $\Delta f$ denotes the frequency spacing.

\subsubsection{Sensing Receiver} 
The BS receives baseband signals from reflections, including target echoes and BD reflections. We assume that the BS is full-duplex \cite{8805161}, so that self-interference can be canceled in the BS.
Thus, the baseband signal received by the sensing receiver is given by
\vspace{-0.1cm}
\begin{align}
 Y_{r}(m,n) = \alpha_t (&H_{s,t}   + \sum_{k=1}^K b_{k,m} H_{s,k} H_{k,t})(H_{s,t}  \nonumber \\ &+ \sum_{k=1}^K b_{k,m} H_{s,k} H_{k,t})^H X(m,n)   + V_{r}(m,n),
 \label{radar receiver}
\end{align}
where $\alpha_t \sim \mathcal{CN}(0, \sigma_t^2)$ represents the radar cross section (RCS) of the sensing target. $H_{s,t}$ and $H_{k,t}$ denote the channel between the BS to the target and the channel between $k$-th BD and the target, respectively, which also follows \eqref{channel}.  $V_{r}(m,n)  \sim \mathcal{CN}(0,\sigma_r^2)$ denote the noise in frequency domain.  Since the communication subcarriers are also known in the BS, their extra sensing gains are also considered \cite{liyanaarachchi2021optimized}.

Since the SMI relates to various sensing metrics, such as minimum mean square error (MMSE), angle of arrival (AoA), and distance, and can be unified with communication metrics to measure ISAC performance~\cite{10540149,bicua2019multicarrier}, we adopt SMI to evaluate the sensing performance of the proposed system. Let $G =(H_{s,t}   
+ \sum_{k=1}^K b_{k,m} H_{s,k} H_{k,t}) $, the SMI can be calculated as 
\vspace{-0.2cm}
{ \small
\begin{equation}
I_r = \frac{\Delta f}{T_{{o}}}  \sum_{m \in \mathcal{M}} \sum_{n \in \mathcal{N}_m} 
\log_2 \Bigg( 1 +  \frac{| \alpha_t|^2| GG^H|^2}
{\sigma_r^2}\times \mathbb{E} \{|X(m,n)|^2\} \Bigg),
\label{eq:Ir_again}
\end{equation}}
where the additional sensing gain contributed by communication subcarriers is also considered \cite{liyanaarachchi2021optimized}.

\vspace{-0.3cm}
\section{Pareto Optimization Boundary for the Proposed System} \label{Problem Formulation}

In this section, we first formulate the optimization problem for the BD-assisted ISAC system, considering various practical constraints. Then, we define the achievable performance region between sensing and communication using the Pareto boundary and present approaches to obtaining this optimal boundary.

\vspace{-0.3cm}
\subsection{Pareto Optimization Boundary Formulation}

\textbf{\textit{Remark 1} (Pareto Optimization \cite{gao2023cooperative,10540149}):} In multi-objective optimization, traditional optimality is typically replaced by Pareto optimality, defined as achieving the best possible performance in one objective without degrading the performance in another.

In this paper, we focus on the unified downlink communication and radar sensing based on the proposed system, where the BS cooperatively designs the transmission policies, including the RE allocation, power allocation, and BD modulation decisions to optimize dual-functional performances. 
Mathematically, the corresponding optimization problem can be formulated as
\begin{subequations}
\begin{align}
\max_{\mathcal{N}_{n,m}, P_{n,m},\Phi_m} \quad &   f  (I_c, I_s) \\ 
\textit{s.t.} & \quad I_r \geq 0, C_d \geq 0, \label{eq:SC_limit} \\
& \quad P_c + P_r \leq P_t, \label{eq:power_limit} \\
& \quad 0\leq P_{n,m} \leq P_{\text{max}}, \label{eq:signal_power} \\
& \quad \mathcal{N}_{n,m} \in \{0,1\},  \label{eq:binary_alloc}  \\
& \quad  \phi_{k,m} \in \{0, \pi\}, \quad \forall k \in \mathcal{K}, \label{eq:modulation_alloc}
\end{align}
\end{subequations}
where $f(I_r, C_d)$ denotes the search for Pareto-maximal solutions under the component-wise partial order, and $\Phi_m = \{\phi_{1,m}, \dots, \phi_{K,m}\}$ represents the modulation decisions of $K$ BDs for the $m$-th OFDM sample. Constraint~\eqref{eq:SC_limit} ensures feasible sensing and communication performance. Constraint~\eqref{eq:power_limit} restricts total transmit power ($P_c + P_r$) within the maximum budget $P_t$. Constraint~\eqref{eq:signal_power} limits the power at each resource element (RE) to $P_{\text{max}}$, constraint~\eqref{eq:binary_alloc} enforces binary allocation of REs for radar or communication usage, and constraint~\eqref{eq:modulation_alloc} restricts the concrete BD phase modulation.


Due to shared time-frequency REs, transmit power, and BD modulation decisions, there exists an inherent trade-off between achievable SMI and communication rate. Characterizing and achieving this optimal trade-off is essential to guide practical BD-assisted ISAC system design. Next, we define the performance region and discuss methods to determine the optimal sensing-communication trade-off.


\vspace{-0.5cm}
\subsection{Pareto Boundary of the proposed system}
In this subsection, we provide the necessary definitions and lemmas for the Pareto boundary.

\textit{Definition 1: Achievable Performance Region.} For the joint communication and sensing system, the mutual performance region under the defined constraint is defined as follows:
\begin{align}
\mathcal{R} = \{(I_s, C_d) : C_d \leq C_d^{\max}, I_s \leq I_s^{\max},  \eqref{eq:SC_limit}-\eqref{eq:binary_alloc} \}, \label{region}
\end{align}
where $I_s$ and $C_d$ denote the achievable SMI and communication rate, respectively. The two-dimensional region $\mathcal{R}$ characterizes all simultaneously achievable sensing and communication performance pairs under feasible transmission strategies. The shape of $\mathcal{R}$ depends on factors such as channel conditions, power constraints, and system parameters. Next, we present several properties of $\mathcal{R}$ to characterize the Pareto boundary.

\textit{Definition 2 (compact and normal set \cite{chen2021joint}):} A set $G \subseteq \mathbb{R}_+^n$ is called compact if it is closed and bounded. A set $G \subseteq \mathbb{R}_+^n$ is called a normal region if for any two points $x \in G$ and $x' \in \mathbb{R}_+^n$, if $x' \leq x$ (element-wise), then $x' \in G$.

\begin{lemma}
The achievable performance region $\mathcal{R}$ of the proposed system defined in (\ref{region}) is compact and normal. 
\end{lemma}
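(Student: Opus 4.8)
The plan is to realize $\mathcal{R}$ as the downward closure of the continuous image of a compact, nonempty feasible set, and then verify closedness, boundedness, and the monotone (normality) property in turn. We read $\mathcal{R}$ in \eqref{region} as the set of all performance pairs dominated by some achievable pair — the standard convention under which such a region can be normal.

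First, we make the feasible set precise and show it is compact and nonempty. Collect the optimization variables into a strategy $\xi=(\{\mathcal{N}_{n,m}\},\{P_{n,m}\},\{\phi_{k,m}\})$ and let $\mathcal{F}$ be the set of $\xi$ satisfying \eqref{eq:SC_limit}--\eqref{eq:binary_alloc} together with \eqref{eq:modulation_alloc}. The RE indicators $\mathcal{N}_{n,m}\in\{0,1\}$ and the BD phases $\phi_{k,m}\in\{0,\pi\}$ jointly range over a finite set $\mathcal{D}$, while the powers lie in $\mathcal{P}=\{\{P_{n,m}\}:0\le P_{n,m}\le P_{\max},\ \sum_{m,n}P_{n,m}\le P_t\}$, which — as the intersection of the box $[0,P_{\max}]^{MN}$ with a closed half-space — is compact; note the budget \eqref{eq:power_limit} does not depend on the RE partition, since $P_c+P_r=\sum_{m,n}P_{n,m}$. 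Hence $\mathcal{F}=\mathcal{D}\times\mathcal{P}$ is compact, and it is nonempty because the strategy with $P_{n,m}\equiv 0$ (and any discrete choice) satisfies every constraint, in particular \eqref{eq:SC_limit} as then $I_s=C_d=0$.

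Next, we pass to the achievable region. Fixing a discrete configuration in $\mathcal{D}$ fixes the partitions $\mathcal{N}_{r,m},\mathcal{N}_{c,m}$ and the symbols $b_{k,m}=\alpha_k e^{j\phi_{k,m}}$, hence all effective channel coefficients entering \eqref{eq:Ir_again} and $C_d$; writing the per-RE transmit power as $\mathbb{E}\{|X(m,n)|^2\}=\mathbb{E}\{|X_c(m,n)|^2\}=P_{n,m}$, both $I_s$ and $C_d$ reduce to finite sums of terms $\log_2(1+c_{n,m}P_{n,m})$ with nonnegative constants $c_{n,m}$, and are therefore continuous (indeed smooth and concave) in $\{P_{n,m}\}$ on $\mathcal{P}$. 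Thus the performance map $\Gamma:\xi\mapsto(I_s,C_d)$ is continuous on each of the finitely many slices $\{d\}\times\mathcal{P}$, so $\mathcal{A}:=\Gamma(\mathcal{F})$ is a finite union of continuous images of compact sets, hence compact and nonempty with $(0,0)\in\mathcal{A}$. The individual maxima $I_s^{\max}$ and $C_d^{\max}$ in \eqref{region} therefore exist, and $\mathcal{R}=\{(x,y)\in\mathbb{R}_+^2:\exists\,(a,b)\in\mathcal{A},\ x\le a,\ y\le b\}$. Normality is then immediate: if $(x,y)\in\mathcal{R}$ is dominated by $(a,b)\in\mathcal{A}$ and $0\le(x',y')\le(x,y)$ componentwise, then $(x',y')\le(a,b)$, so $(x',y')\in\mathcal{R}$. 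For compactness, $\mathcal{R}\subseteq[0,I_s^{\max}]\times[0,C_d^{\max}]$ gives boundedness; for closedness, take $(x_j,y_j)\in\mathcal{R}$ with $(x_j,y_j)\to(x,y)$ and dominating points $(a_j,b_j)\in\mathcal{A}$, extract a subsequence $(a_j,b_j)\to(a,b)\in\mathcal{A}$ by compactness of $\mathcal{A}$, and pass to the limit to obtain $x\le a$, $y\le b$, $x,y\ge 0$, i.e. $(x,y)\in\mathcal{R}$.

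The step requiring care, and the only real obstacle, is that $\mathcal{N}_{n,m}$ and $\phi_{k,m}$ are integer-valued, so $\mathcal{F}$ is nonconvex and $\Gamma$ is only piecewise continuous; the argument must therefore be carried out over the finitely many discrete configurations and then assembled by a finite union, which preserves both compactness and the "image of a compact set" structure. The remaining ingredients — continuity of the log-sum rate and SMI expressions in the powers, and compactness of the truncated power simplex — are routine.
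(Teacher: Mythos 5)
Your compactness argument is essentially the paper's own, only tidied up: you bound $I_s$ and $C_d$ through the finite power budget and finite subcarrier set, and you obtain closedness by noting the strategy space is a finite union of slices (fixed discrete configuration) $\times$ (compact power polytope) on which the performance map is continuous, then extracting convergent subsequences. That part is correct and, if anything, more carefully structured than the appendix in the paper.

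The normality part, however, contains a genuine gap of interpretation rather than of computation. You declare at the outset that $\mathcal{R}$ in \eqref{region} is the \emph{downward closure} of the set $\mathcal{A}$ of exactly achievable pairs, after which normality is true by construction and carries no content. But the paper defines $\mathcal{R}$ as ``all simultaneously achievable sensing and communication performance pairs under feasible transmission strategies,'' i.e.\ the image $\mathcal{A}$ itself, and its proof of the lemma is precisely the nontrivial claim that this image is downward closed: given an achievable $(I_s,C_d)$ and any $(I_s',C_d')$ with $0\le I_s'\le I_s$, $0\le C_d'\le C_d$, one must exhibit a feasible strategy that attains $(I_s',C_d')$ exactly. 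The paper does this by a ``throttling'' argument (continuously reducing or zeroing power on selected subcarriers, or re-designating subcarriers), and this is where the real work lies: because the SMI in \eqref{eq:Ir_again} also collects sensing gain from the communication subcarriers, scaling down communication power to lower $C_d$ simultaneously lowers $I_r$, so hitting an arbitrary dominated pair requires a two-step argument (e.g.\ first re-label or scale communication subcarriers to bring $C_d$ down to $C_d'$ while tracking the induced change in $I_r$, then reduce radar-subcarrier power, which leaves $C_d$ untouched, and invoke continuity/intermediate-value reasoning to hit $I_s'$ exactly). Your proposal never engages with this coupling because the redefinition makes the question disappear; under the paper's reading of $\mathcal{R}$, your argument establishes compactness but not normality. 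To repair it, either prove that $\mathcal{A}$ itself is downward closed along the lines above, or explicitly argue that replacing $\mathcal{A}$ by its downward closure is without loss for the subsequent Pareto-boundary results — neither of which is currently in the write-up.
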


\begin{proof}
The proof is given in Appendix \ref{appendice1}.
\vspace{-0.2cm}
\end{proof}

These properties are crucial to guarantee the existence of an optimal Pareto solution and ensure that the optimal performance can always be achieved on the Pareto boundary.

\textit{Definition 3: Pareto Boundary:} The Pareto boundary of the performance region is defined as the set of points $(I_s, C_d)$ such that no other achievable performance pair $(I'_s, C_d') \in \mathbb{R}_+^n$ satisfies both $I'_s \geq I_s$ and $C_d' \geq C_d$ simultaneously. Mathematically, the Pareto boundary is given by:
\begin{align}
\mathcal{K} = \Big\{(I_s, C_d) : & \nexists (I'_s, C_d') \in \mathbb{R}_+^n, I'_s \geq I_s, C_d' \geq I_s, \nonumber \\
& \eqref{eq:SC_limit}-\eqref{eq:binary_alloc} \Big\}.
\end{align}

Next, we formulate the problem to search the Pareto boundary.

\vspace{-0.5cm}
\subsection{Pareto Boundary Searching}


In our problem, due to the nonconvexity of the objective function and the feasible set, $\mathcal{R}$ is generally non-convex. To find the Pareto optimal points, i.e., the whole Pareto boundary, of the achievable performance region $\mathcal{R}$ in the proposed system, we consider the following two optimization problems:

\subsubsection{Sensing Maximization under Communication Constraint (Sensing-centric optimization problem)}
\begin{subequations}\label{P1}
\begin{align}
\text{(P1)}   \max_{\mathcal{N}_{r,m}, P_{n,m},\Phi_m}  &I_r(\mathcal{N}_{r,m}, P_{n,m},\Phi_m) \label{eq:obj} \\
\textit{s.t.} & \quad C_d(\mathcal{N}_{r,m}, P_{n,m},\Phi_m) \geq \Gamma_c, \label{eq:comm_rate} \\
& \quad \eqref{eq:SC_limit}-\eqref{eq:binary_alloc},
\end{align}
\end{subequations}
where $\Gamma_c$ represents the minimum required communication rate.

\subsubsection{Communication Maximization under Sensing Constraint (Communication-centric optimization problem)}
\begin{subequations}\label{P2}
\begin{align}
\text{(P2)} \quad \max_{\mathcal{N}_{r,m}, P_{n,m},\Phi_m} \quad & C_d(\mathcal{N}_{r,m}, P_{n,m},\Phi_m) \\ \label{14a}
\textit{s.t.} & \quad I_s(\mathcal{N}_{r,m}, P_{n,m},\Phi_m) \geq \Gamma_s, \\\label{14b}
&\quad \eqref{eq:SC_limit}-\eqref{eq:binary_alloc},
\end{align}
\end{subequations}
where $\Gamma_s$ represents the minimum required SMI.


The complete Pareto boundary can be obtained by the constrained method, which is guaranteed by the following lemma.

\begin{lemma}
The complete Pareto boundary of the achievable performance region $\mathcal{R}$ defined in \eqref{region} can be characterized by the solutions to problem \eqref{P1} with different communication constraints $\Gamma_c$, or by the solutions to problem \eqref{P2} with different sensing constraints $\Gamma_s$.
\end{lemma}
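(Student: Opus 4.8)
The plan is to prove both directions of the characterization via the $\epsilon$-constraint (constrained scalarization) method, leaning on the compactness and normality of $\mathcal{R}$ established in Lemma~1. I will argue the claim for problem~\eqref{P1}; the statement for \eqref{P2} then follows verbatim by exchanging the roles of $I_s$ and $C_d$ (and of $\Gamma_c$ and $\Gamma_s$, $C_d^{\max}$ and $I_s^{\max}$).

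First, every point on the Pareto boundary is recovered by some instance of \eqref{P1}. Let $(I_s^{\star}, C_d^{\star}) \in \mathcal{K}$ be attained by some feasible transmission strategy, and set $\Gamma_c = C_d^{\star}$. That strategy is feasible for \eqref{P1}; if it were not optimal, there would exist another feasible strategy with objective value $I_r > I_s^{\star}$ while still meeting $C_d \geq \Gamma_c = C_d^{\star}$, producing an achievable pair that dominates $(I_s^{\star}, C_d^{\star})$ and contradicting its Pareto optimality. Hence each boundary point is an optimal solution of \eqref{P1} for the choice $\Gamma_c = C_d^{\star}$, so sweeping $\Gamma_c$ over $[0, C_d^{\max}]$ reaches all of $\mathcal{K}$.

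Second, every optimizer of \eqref{P1} lies on $\mathcal{K}$, possibly after a tie-break refinement. Fix a feasible $\Gamma_c \in [0, C_d^{\max}]$, let $I_s^{\circ}$ be the optimal value of \eqref{P1}, and consider the set of achievable pairs with first coordinate $I_s^{\circ}$ and second coordinate at least $\Gamma_c$; by compactness of $\mathcal{R}$ this set is nonempty and compact, so it contains a pair $(I_s^{\circ}, \widetilde{C}_d)$ maximizing the second coordinate. This pair is Pareto optimal: any achievable pair with $I_s \geq I_s^{\circ}$ must satisfy $I_s = I_s^{\circ}$ (otherwise it would violate optimality of \eqref{P1}), and any such pair with $C_d \geq \widetilde{C}_d$ must satisfy $C_d = \widetilde{C}_d$ by maximality of $\widetilde{C}_d$, so nothing strictly dominates it. Thus the optimal value of \eqref{P1} at each $\Gamma_c$ coincides with the sensing coordinate of a point in $\mathcal{K}$, and together with the first part this shows the solutions of \eqref{P1} over all $\Gamma_c$ trace out exactly the Pareto boundary.

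The main obstacle I anticipate is exactly this gap between weak and strong Pareto optimality: a raw maximizer of the $\epsilon$-constraint problem \eqref{P1} is only guaranteed to be weakly Pareto optimal, since there could be an achievable pair with the same SMI but strictly larger rate. Closing this gap is where Lemma~1 is essential — compactness ensures the ``rightmost'' achievable pair at a given SMI level exists, and normality keeps it inside $\mathcal{R}$; equivalently, one can build it into the subproblem by solving \eqref{P1} with a lexicographic objective (maximize $I_r$, then $C_d$), which I would state as the constructive counterpart of the existence argument. The corresponding claim for \eqref{P2} is obtained by the same two-step argument with communication and sensing interchanged.
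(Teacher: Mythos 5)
Your proposal is correct and follows essentially the same two-direction constrained-method argument as the paper's Appendix B: every Pareto point is recovered by solving \eqref{P1} with $\Gamma_c$ set to its rate coordinate, and every optimizer of \eqref{P1} lies on the boundary by compactness and normality from Lemma~1, with the symmetric argument for \eqref{P2}. Your explicit tie-break refinement (maximizing $C_d$ among optimizers at the optimal SMI level, or a lexicographic objective) closes the weak-versus-strong Pareto gap that the paper's proof only implicitly attributes to normality and compactness, so it is a slightly more rigorous rendering of the same route.
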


\begin{proof}
The proof is given in Appendix \ref{appendice2}.
\end{proof}


\textbf{\textit{Remark 2} (Equivalence of Problem):} It can be observed that \eqref{P1} and \eqref{P2} have common optimized variables and structure in the SMI and communication rate functions. Also, they can be transformed into each other by exchanging the object function and the performance constraint. Thus, a similar algorithm can solve them simultaneously. In the next section, we take \eqref{P1} as an example and propose an efficient algorithm to approximately solve it in the proposed system, while it is worth noting that both \eqref{P1} and \eqref{P2} are implemented in Section \ref{performance} for performance evaluation.

\vspace{-0.3cm}
\section{Approximately Pareto Optimal Resource Allocation For Joint Sensing and Communication} \label{Pareto Boundary}

This section proposes a near-optimal solution to characterize the Pareto boundary. The original optimization problem is decomposed into three manageable subproblems, each solved separately via suitable nonconvex relaxation methods and iteratively updated within a BCD framework.

\vspace{-0.3cm}
\subsection{Proble Decomposition}

Problem (P1) is inherently non-convex due to coupling between the RE allocation, power allocation variables, and binary constraint, making it NP-hard to solve optimally~\cite{gao2023cooperative}. To overcome this, we decompose (P1) into three subproblems: RE allocation optimization, power allocation optimization, and BD modulation decision optimization.

\textbf{\textit{RE Allocation Optimization:}} When power allocation and BD modulation decision are fixed, (P1) reduces to:

\vspace{-0.3cm}
\begin{subequations}
\begin{align}
\text{(P3)} \quad \max_{\mathcal{N}_{r,m}} \quad & I_r(\mathcal{N}_{r,m}) \\
\textit{s.t.} & \quad \eqref{eq:comm_rate}, \ \eqref{eq:binary_alloc}.
\label{eq:subcarrier_allocation_original}
\end{align}
\end{subequations}

\textbf{\textit{Power Allocation Optimization:}} With fixed RE allocation and BD modulation decision, problem (P1) reduces to:
\begin{subequations}
\begin{align}
\text{(P4)} \quad \max_{P_{n,m}} \quad&I_r(P_{n,m}) \label{P1.1a}\\
\textit{s.t.} & \quad \eqref{eq:comm_rate}, \ \eqref{eq:power_limit}, \ \eqref{eq:signal_power}.
\end{align}
\end{subequations}

\textbf{\textit{BD Modulation Decision Optimization:}} With fixed RE allocation and power allocation, problem (P1) reduces to:
\begin{subequations}
\begin{align}
\text{(P5)} \quad \max_{\Phi_m} \quad&I_r(\Phi_m) \label{P1.1a}\\
\textit{s.t.} & \quad \eqref{eq:comm_rate}, \ \eqref{eq:modulation_alloc}.
\end{align}
\end{subequations}

Following the BCD framework, problem (P1) can be approximately solved by iteratively updating the solutions to the three subproblems until convergence~\cite{tseng2001convergence}. In the next subsection, we propose an algorithm to iteratively solve these subproblems.

\textbf{\textit{Remark 3} (Two Cases for BDs Modulation Optimization):} As mentioned in subsection \ref{Backscatter Model}, two cases for BD modulations are considered. In the case that BDs cannot be timely controlled (\textbf{\textit{Scenario 1}}), the subproblem (P5) is not considered in the optimization procedure. In the case of BDs that can be timely controlled (\textbf{\textit{Scenario 2}}), all three above subproblems are considered for optimization. 
\vspace{-0.2cm}
\subsection{Proposed Optimization Algorithm}


\subsubsection{Resource Elements Optimization Allocation}

 To solve the RE allocation problem, we first introduce an auxiliary indicator for radar and communication allocation:
\begin{equation}
\mathbb{I}_{\mathcal{N}_{r,m}}(n) =
\begin{cases}
1, & n \in \mathcal{N}_{r,m}, \\
0, & n \in \mathcal{N}_{c,m}.
\end{cases}
\label{eq:indicator_function}
\end{equation}

Formally, the optimization problem (P1) now becomes:
\begin{subequations}
\begin{align}
\text{(P4)} \quad \max_{\mathbb{I}_{\mathcal{N}_{r,m}}} \quad & I_r(\mathbb{I}_{\mathcal{N}_{r,m}}) \\
\textit{subject to} & \quad \eqref{eq:comm_rate}, \ \eqref{eq:binary_alloc}.
\label{eq:subcarrier_allocation_original}
\end{align}
\end{subequations}

We can find the subproblem (P4) is a mixed-integer program, which is still nonconvex. To tackle the combinatorial nature of the binary constraint, we relax \(\mathbb{I}_{\mathcal{N}_{r,m}}(n)\) to \([0,1]\) using an \(\ell_0\)-norm approximation. In particular, we have
\begin{equation}
\mathbb{I}_{\mathcal{N}_{r,m}}(n) = \|\Xi_{m,n}^H\mathbf{z}\|_0,
\label{eq:l0_norm}
\end{equation}
and approximate the \(\ell_0\)-norm by the smooth function
\begin{equation}
\|w\|_0 \approx g(w;\delta) = \frac{\ln\Bigl(1+\frac{w}{\delta}\Bigr)}{\ln\Bigl(1+\frac{1}{\delta}\Bigr)}, \quad \forall\,w \ge 0,
\label{eq:smooth_approx}
\end{equation}
with \(\delta > 0\) a small constant. Thus, we relax the binary constraint \eqref{eq:binary_alloc} as
\begin{equation}
g(\mathbb{I}_{\mathcal{N}_{r,m}}(n);\delta) - \mathbb{I}_{\mathcal{N}_{r,m}}(n) \le 0, \quad \forall\,(m,n).
\label{eq:relaxed_constraint}
\end{equation}

Since \(g(\mathbb{I}_{\mathcal{N}_{r,m}}(n);\delta)\) is still nonconvex and the smooth function $g()$ is upper concave, we linearize it \(\mathbb{I}_{\mathcal{N}_{r,m}} (n)\) via a first-order Taylor expansion:
\begin{align}
g(\mathbb{I}_{\mathcal{N}_{r,m}}(n)&;\delta) \le g\Bigl(\mathbb{I}_{\mathcal{N}_{r,m}} (n);\delta\Bigr) \nonumber
\\
&+ \nabla g\Bigl(\mathbb{I}_{\mathcal{N}_{r,m}} (n);\delta\Bigr)
\Bigl(\mathbb{I}_{\mathcal{N}_{r,m}}(n) - \mathbb{I}_{\mathcal{N}_{r,m}} (n)\Bigr),
\label{eq:taylor_expansion}
\end{align}

where
\vspace{-0.3cm}
\begin{align}
\gamma_{m,n}  &\triangleq \frac{1}{\ln\Bigl(1+\frac{1}{\delta}\Bigr)}\,\frac{1}{\delta+\mathbb{I}_{\mathcal{N}_{r,m}} (n)}, \label{eq:gamma_def} \\
\kappa_{m,n}  &\triangleq g\Bigl(\mathbb{I}_{\mathcal{N}_{r,m}} (n);\delta\Bigr) - \gamma_{m,n}^{(t)}\,\mathbb{I}_{\mathcal{N}_{r,m}} (n). \label{eq:kappa_def}
\end{align}

Then, the surrogate constraint constraint in \eqref{eq:relaxed_constraint} becomes
\begin{equation}
-\gamma_{m,n} \,\mathbb{I}_{\mathcal{N}_{r,m}}(n) + \kappa_{m,n}  - \mathbb{I}_{\mathcal{N}_{r,m}}(n)\le 0, \quad \forall\,(m,n).
\label{eq:surrogate_constraint}
\end{equation}

The subcarrier allocation subproblem now becomes
\vspace{-0.1cm}
\begin{subequations}
\begin{align}
\text{(P5)} \quad \max_{\mathcal{N}_{r,m}} \quad  &I_r(\mathcal{N}_{r,m}) \\
\textit{s.t.} & \quad \eqref{eq:comm_rate}, \nonumber\\[1mm]
& -\gamma_{m,n} \,\mathbb{I}_{\mathcal{N}_{r,m}}(n) + \kappa_{m,n}  - \mathbb{I}_{\mathcal{N}_{r,m}}(n)\le 0,  \\
& \quad  0 \le \mathbb{I}_{\mathcal{N}_{r,m}}(n) \le 1.
\label{eq:subcarrier_allocation_modified}
\end{align}
\end{subequations}





This problem is convex and can be solved efficiently using convex optimization tools such as CVX. After solving the relaxed problem, we apply a thresholding step to recover the discrete subcarrier allocation:
\begin{equation}
\mathbb{I}_{\mathcal{N}_{r,m}}(n) =
\begin{cases}
1, & \text{if } \mathbb{I}_{\mathcal{N}_{r,m}}(n) \ge 0.5, \\
0, & \text{otherwise}.
\end{cases}
\label{eq:thresholding}
\end{equation}

\subsubsection{Power Optimization Allocation}

Problem (P4) is non-convex because it maximizes a concave SMI term. To obtain a tractable solution, we apply an augmented Lagrangian method to reformulate the constrained problem as follows: 
\begin{align}
\mathcal{L}(P_{m,n}, \lambda, \rho) &=I_r(P_{m,n}) +\nonumber \\
&  \lambda \left(C(P_{m,n}) - \Gamma_c\right) - \frac{\rho}{2}\left(C(P_{m,n}) - \Gamma_c\right)^2,
\label{eq:augmented_lagrangian}
\end{align}
where $\lambda \geq 0$ is the Lagrange multiplier and $\rho > 0$ is a penalty parameter. The multiplier update rule is: 
\vspace{-0.1cm}
\begin{equation}
        \lambda \leftarrow \max\left\{0, \lambda + \rho(C(P_{m,n}) - \Gamma_c)\right\}. 
    \label{lambda}
\end{equation}

The penalty parameter update rule is: 
\begin{equation}
\rho_{k+1} = \rho_k \cdot \max\left(2, \frac{\|\nabla  C\|_k}{\|\nabla  C\|_{k-1}}\right),
    \label{o}
\end{equation}
which ensures that the optimization process always takes place in a locally convex region, thus guaranteeing the convexity of the problem (P3).

Now, the original constrained optimization problem (P2) can be reformulated as follows:
\vspace{0.2cm}
\begin{subequations}
\begin{align}
\text{(P6)} \quad\max_{P_{m,n}} \quad&\mathcal{L}(P_{m,n}, \lambda, \rho) \label{P1.1a}\\
\textit{s.t.} & \quad \eqref{eq:power_limit},  \text{and} \ \eqref{eq:signal_power},
\end{align}
\end{subequations}
where the goal is to maximize the augmented Lagrangian function \( \mathcal{L}(P_{m,n}, \lambda, \rho) \) with respect to the power allocation \( P_{m,n} \). It can be found that the problem (P3) is convex, which can be solved using gradient-based optimization methods, and the  solution must satisfy the following Karush–Kuhn–Tucker (KKT) stationarity conditions:


\vspace{-0.3cm}
\begin{align}
\nabla_{P_{m,n}} &I_r(P_{m,n}^*) + \lambda^* \nabla_{P_{m,n}} C(P_{m,n}^*) \nonumber \\
&- \rho \bigl(C(P_{m,n}^*) - \Gamma_c\bigr) \nabla_{P_{m,n}} C(P_{m,n}^*) = 0,  \label{KKT}
\end{align} 
where $P_{m,n}^*$ denotes the solution in each iteration of the BCD algorithm. 

It can be found that the stationarity condition \eqref{KKT} admits a water-filling-type–type structure where we obtain a closed-form solution for optimal power allocation in a water-filling fashion:

\vspace{-0.3cm}
\begin{equation}
\resizebox{\columnwidth}{!}{$
P^{*}(m,n)=
\begin{cases}
\displaystyle
\left[
\frac{\ln 2 \big/ \Delta f}{-\nu}
\;-\;
\frac{\sigma_r^{2}}{|\alpha_t|^{2}\,|G(m,n)|^{2}}
\right]^{+},
& (m,n)\in\mathcal N_{r,m},\\[10pt]
\displaystyle
\left[
\frac{\ln 2 \big/ \Delta f}{-\nu-\lambda+\rho\!\left(C-\Gamma_c\right)}
\;-\;
\frac{\sigma_c^{2}}{|H_{c}(m,n)|^{2}}
\right]^{+},
& (m,n)\in\mathcal N_{c,m},
\end{cases}
$}
\label{eq:optimal_power_allocation}
\end{equation}
where $H_c$ = $H_{b,u}+\sum_{k=1}^K b_{k,m}H_{b,k}H_{k,u}$, \(\nu\) is the Lagrange multiplier for the total power constraint \eqref{eq:power_limit}), \(\eta(m,n)\) is the communication constraint penalty term given by $\eta(m,n) = \rho\left(C(P_{m,n}) - \Gamma_c\right) \frac{\partial C}{\partial P(m,n)}$, \([\cdot]^+\) indicates projection onto non-negative values. The penalty term adjusts the power allocation to ensure the communication constraint.

The required gradients for the iterative updates are given explicitly by:
\begin{equation}
\frac{\partial I_r}{\partial P(m,n)} = \Delta f\,\frac{1}{\ln 2}\,\frac{|\alpha_tG G ^H|^2}{\sigma_r^2 + |\alpha_tG G ^H|^2\,P(m,n)},  \label{gradient updates1}
\end{equation}
\vspace{-0.2cm}
\begin{equation}
\frac{\partial C}{\partial P(m,n)} = \Delta f\,\frac{1}{\ln 2}\,\frac{\left|H_c\right|^2}{\sigma_c^2 + \left|H_c\right|^2\,P(m,n)},  \label{gradient updates2}
\end{equation} 
where the derivative procedure is given in Appendix \ref{appendice3}.




\subsubsection{BD Modulation Decision Optimization}

Since the phase-shift elements of BDs are chosen from the discrete set $\{0, \pi\}$. Equivalently, if $b_{k,m} = \alpha_k e^{j\phi_{k,m}}$ with $\phi_{k,m} \in \{0,\pi\}$, we can write 
\begin{equation}
b_{k,m} = \alpha_k x_{k,m}, 
\quad
x_{k,m} \in \{\pm 1\}. \label{xtransform}
\end{equation}

Collect the BD modulation variables for a given symbol~$m$ into $\mathbf{x} \;=\; [\,x_{1,m}, \dots, x_{K,m}\,]^\mathrm{T}\;\in\;\{\pm1\}^{K}$. We focus on one symbol index $m$ for clarity, although the derivation extends straightforwardly if each $m$ has its own $\mathbf{x}_m$.

Define the effective channel gains of $m$-th sample in each iteration for communication and sensing as
\begin{align}
H_m(\mathbf{x}) &= H_{b,u} \;+\; \sum_{k=1}^{K} \alpha_k\, x_k\, H_{b,k}\,H_{k,u}, \\
G_m(\mathbf{x}) &= H_{s,t} \;+\; \sum_{k=1}^{K} \alpha_k\, x_k\, H_{s,k}\,H_{k,t}.
\end{align}

Each $|G_m(\mathbf{x})|^2$ and $|H_m(\mathbf{x})|^2$ is a Hermitian quadratic form in $\mathbf{x}$:
\begin{align}
\bigl|G_m(\mathbf{x})\bigr|^2 \;=\; \mathbf{x}^\mathrm{H}\,\mathbf{Q}_r^{(m)}\,\mathbf{x}, \quad
\bigl|H_m(\mathbf{x})\bigr|^2 \;=\; \mathbf{x}^\mathrm{H}\,\mathbf{Q}_c^{(m)}\,\mathbf{x},
\end{align}
where
\begin{align}
\mathbf{Q}_r^{(m)} &= (\mathbf{h}_r^{(m)}+\mathbf{h}_r^{(0)})(\mathbf{h}_r^{(m)}+\mathbf{h}_r^{(0)})^\mathrm{H}, \\[3pt]
\mathbf{Q}_c^{(m)} &= (\mathbf{h}_c^{(m)}+\mathbf{h}_c^{(0)})(\mathbf{h}_c^{(m)}+\mathbf{h}_c^{(0)})^\mathrm{H},
\end{align}
and $\mathbf{h}_r^{(m)} \;=\;
[\,
\alpha_1 H_{s,1}H_{1,t},\,\dots,\,\alpha_K H_{s,K}H_{K,t}
]^\mathrm{T}$.

\begin{algorithm}[!t]
\footnotesize
\SetCommentSty{small}
\LinesNumbered
\caption{BCD Algorithm}
\label{alg:ISAC_Alternating_Optimization_Revised}
\KwIn{$\alpha_t$, $G$, $\sigma_r^2$, $\sigma_c^2$, $P_t$, $\Gamma_c$, $\Delta f$, $\epsilon_1$, $\epsilon_2$, $\epsilon_3$.}
\KwOut{$\mathcal{N}_{r,m}$, $P_{m,n}$, and $\Phi_m$.}

Randomly initialize $\mathcal{N}_{r,m}^{(0)}$, $P_{m,n}^{(0)}$, and $\Phi_m$. Set $t$ = 0.

\While{not converged}{
    Initialize $\lambda^{(0)}$, $\rho^{(0)}$, and set 
    $s = 0$, $d = 0$, $e = 0$, 
    $P_{m,n,\mathrm{inner}}^{(0)} = P_{m,n}^{(t)}$.

    \While{$\|\mathcal{N}_{r,m}^{(s+1)} - \mathcal{N}_{r,m}^{(s-1)}\|_0 > \epsilon_2$}{
        Initialize $\tilde{\mathbb{I}}_{\mathcal{N}_{r,m}}^{(0)}$.\\
        Update $\gamma_{m,n}^{(j)}$ and $\kappa_{m,n}^{(j)}$ using (\ref{eq:gamma_def}) and (\ref{eq:kappa_def}).\\
        Solve the convex subproblem (P5) via CVX.\\
        Apply (\ref{eq:thresholding}) to recover discrete allocation $\mathcal{N}_{r,m}^{(j)}$.\\
        $s \leftarrow s + 1$.
    }

    \While{$\|P_{m,n}^{(d)} - P_{m,n}^{(d-1)}\|_F^2 > \epsilon_1$}{
        Update $P_{m,n,\mathrm{inner}}^{(d+1)}$ using the water‐filling solution in (\ref{eq:optimal_power_allocation}).\\
        Update $\lambda$ via (\ref{lambda}), $\rho$ via (\ref{o}).\\
        $d \leftarrow d + 1$.
    }

    \While{$\|\phi_m^{(e1)} - \phi_m^{(e-1)}\|_F^2 > \epsilon_3$}{
        Transform $b_{k,m}$ to $x_{k,m}$ using \eqref{xtransform}.\\
        Relax $x$ via \eqref{xRelax} to form nonconvex problem (P7).\\
        Solve $P7$ in CVX.\\
        Map $x$ back to $\{0,\pi\}$ via the smooth threshold in \eqref{eq:phase_map}.\\
        $e \leftarrow e + 1$.
    }

    $t \leftarrow t + 1$.
}

\Return{$P_{m,n}^* = P_{m,n}^{(t)}$,~~$\mathcal{N}_{r,m}^* = \mathcal{N}_{r,m}^{(t)}$,$\Phi_m^*$ = $\Phi_m^{(e)}$.}
\end{algorithm}

Directly maximizing $I_r(\mathbf{x})$ under $C(\mathbf{x})\ge \Gamma_c$ and the discrete constraint $\mathbf{x}\in\{\pm1\}^K$ leads to a nonconvex integer program.  One approach is to first relax the $\{\pm1\}$ constraint into a convex semidefinite set. Specifically, we lift $\mathbf{x}$ into the matrix $\mathbf{X}=\mathbf{x}\,\mathbf{x}^\mathrm{H}\succeq0$ with diagonal entries $\mathbf{X}_{ii}=1$. For a linear or quadratic form in $\mathbf{x}$, we have
\begin{equation}
\mathbf{x}^\mathrm{H}\,\mathbf{Q}\,\mathbf{x}
\;=\;
\mathrm{Tr}(\mathbf{Q}\,\mathbf{x}\,\mathbf{x}^\mathrm{H})
\;=\;
\mathrm{Tr}(\mathbf{Q}\,\mathbf{X}). \label{xRelax}
\end{equation}

Hence, a simplified version of the relaxed problem (P5) can be reformulated as:

\vspace{-0.3cm}
\begin{subequations} \label{SDRprob}
\begin{align}
\text{(P7)} \quad \max_{\mathbf{X}\succeq0}\quad &
\sum_{m}\mathrm{Tr}\bigl(\mathbf{Q}_r^{(m)}\,\mathbf{X}\bigr)
\\[-3pt]
\text{s.t.}\quad &
\sum_{m}\mathrm{Tr} \bigl(\mathbf{Q}_c^{(m)}\,\mathbf{X}\bigr)
\;\ge\;
\Gamma_c,
\\[-3pt]
&\mathbf{X}_{ii} \;=\; 1,\quad \forall i.
\end{align}
\end{subequations}
This is a standard semidefinite program (SDP) that can be solved via convex optimization packages (e.g.\ CVX). After solving, each real-valued solution $x_k$ is mapped back onto $\{0,\pi\}$ via a smooth threshold function, e.g.

\begin{equation}
x_k
~=~
\operatorname{sign}\bigl([\mathbf v]_k\bigr),\qquad
\phi_k
~=~
\frac{\pi}{2}\Bigl(1-x_k\Bigr),
\label{eq:phase_map}
\end{equation}
where $\mathbf v$ is the unit-norm dominant eigenvector of the SDP
solution $\mathbf X^\star$.%
\footnote{If $\operatorname{rank}(\mathbf X^\star)>1$, we apply the
Gaussian randomisation technique: generate
$N_{\text{rand}}$ samples
$\tilde{\mathbf x}^{(i)}\!\sim\!\mathcal N(\mathbf 0,\mathbf X^\star)$,
quantise each sample via~\eqref{eq:phase_map}, and keep the one that
maximises the objective in~(P5).}
To prevent large oscillations at the early iterations, we first use a smooth surrogate $\widetilde{x}_k^{(t)}
~=~
\tanh \bigl(\beta_t\,[\mathbf v]_k\bigr)$, which realises deterministic annealing by gradually increasing the
slope $\beta_t$.  As $\beta_t\!\to\!\infty$, the smooth mapping
converges to the hard threshold in~\eqref{eq:phase_map}, thus providing both numerical stability and a valid binary phase solution.

\vspace{-0.3cm}
\subsection{Algorithm Complexity Analysis}
In our work, we employ a BCD algorithm (Algorithm~\ref{alg:ISAC_Alternating_Optimization_Revised}) to solve the joint optimization problem (P1) by alternately updating the RE allocation, power allocation, and BD modulation decisions. The overall complexity primarily arises from these three inner steps and the outer iteration loop.
For RE allocation, solving the convex relaxation via CVX incurs $\mathcal{O}\left((NM)^{3.5}\right)$ per iteration due to $NM$ variables and constraints. The $\ell_0$-norm approximation and parameter updates add $\mathcal{O}(NM)$ operations. With $T_1$ iterations, the RE allocation complexity is $\mathcal{O}\left(T_1 (NM)^{3.5}\right)$.
For power allocation, the closed-form water-filling solution and gradient updates contribute $\mathcal{O}(NM)$ per iteration. Given $T_2$ iterations, the complexity is $\mathcal{O}\left(T_2 NM\right)$.
For BD modulation, complexity mainly arises from solving the semidefinite relaxation problem (P7) with a $K\times M$ lifted matrix, leading to $\mathcal{O}\left((KM)^{3.5}\right)$ per iteration. The thresholding step has negligible complexity $\mathcal{O}(KM)$. With $T_3$ iterations, the BD modulation complexity totals $\mathcal{O}\left(T_3 (KM)^{3.5}\right)$.
Therefore, the total complexity across $T_4$ outer iterations is dominated by these three subproblems, 

\begin{equation}\label{eq:overall_complexity}
  \mathcal{C}_{\text{totol}}
  =\mathcal{O}\!\Bigl(
      T_4\bigl(
        T_1 (NM)^{3.5}
        + T_2\, NM
        + T_3 (KM)^{3.5}
      \bigr)
    \Bigr).
\end{equation}

The proposed algorithm thus exhibits polynomial-time complexity, which is suitable for practical ISAC systems.

\subsection{Discussion} \label{discussion}
In this section, we discuss how the proposed Pareto optimization algorithm can be adapted to more general ISAC scenarios. We consider (i) a bistatic ISAC system, where the transmitter and radar receiver are deployed at different locations, and (ii) a MIMO setting, where the BS and the BD employ multiple antennas.

\subsubsection{Adaptability for Bistatic ISAC Systems}

In the bistatic setting, the user side remains the same, while the radar transmitter and receiver are spatially separated. Rather than having a monostatic echo channel, the radar channel in \eqref{radar receiver} now becomes:

\vspace{-0.7cm}
\begin{align}
 Y_{r}(m,n) = \alpha_t (&H_{s,t}H_{t,r}   + 
 \sum_{k=1}^K b_{k,m}H_{s,t}H_{t,k}H_{k,r}) \\  \nonumber
 &X(m,n)   + V_{r}(m,n),
\end{align}
where $H_{s,t}$, $H_{t,r}$, $H_{k,r}$ represent the frequency channel response of the BS to the target, the target to the sensing receiver, and the k-th BD to the sensing receiver, respectively.

Compared with the SMI equation of the proposed monostatic ISAC system defined in \eqref{radar receiver}, the only difference is the change of the cascade channel response. While this changes how the SMI is mathematically expressed, the structure of the RE allocation, power allocation, and BD phase modulation optimization subproblems remains essentially the same, while constraints such as total power, subcarrier allocation, and BD phase shifts remain unchanged but use bistatic channel parameters. Thus, the overall structure of Algorithm \ref{alg:ISAC_Alternating_Optimization_Revised} is preserved, directly extends to bistatic scenarios with suitable adjustments to channel modeling, though numerical outcomes may vary due to geometric differences.  

However, a bistatic ISAC system may require more careful synchronization between the BS and the sensing receiver. If they are not co-located, one must ensure that OFDM or other waveforms remain time/frequency synchronized at both ends for coherent processing \cite{10649809}. In addition, cross-interference or pilot contamination can arise if the BS and the radar receiver share spectrum but do not share a common clock. Appropriate pilot design and interference coordination strategies can address this \cite{yang2017modulation,
xie2024integration}.

\subsubsection{Adaptability to MIMO Settings}

A remarkable advantage of MIMO systems lies in utilizing multiple antennas at the transmitter and/or receiver to enhance both the communication and radar functionalities. For instance, consider a MIMO scenario with a BS equipped with $M_a$ antennas and a number of BDs, each equipped with $M_b$ antennas \cite{9778563}. Suppose there are $N$ subcarriers indexed by $n\in\{1,\ldots,N\}$. Then, each channel response in \eqref{channel} becomes a three-dimensional channel matrix
\begin{equation}
  \mathcal{H} 
  \;=\;
  \bigl\{ 
    h_{n,m_a,m_b}\;\big|\; 
    n\!\in\![1,N],\, 
    m_{a}\!\in\![1,M_{a}],\,
    m_{b}\!\in\![1,M_{b}]
  \bigr\},
\end{equation}
where $h_{n,m_a,m_b}$ represents the fading coefficient (e.g., channel impulse response) on subcarrier $f_n$ between the $m_a$-th antenna at the BS and the $m_b$-th antenna at the $k$-th BD.  
This will bring benefits to both communication and radar sensing performance. Specifically, multi-antenna arrays can enhance target detection and parameter estimation (e.g., angle-of-arrival \cite{yu2022location}), thus increasing $I_r$ for the same power budget. Moreover, MIMO spatial multiplexing boosts the data rate $C_d$, enabling better performance under the same total power. With additional spatial degrees of freedom, the achievable Pareto boundaries tend to shift outward, offering more favorable simultaneous sensing and communication performance. In addition, beamforming strategies could be exploited to boost the performance further. Nevertheless, this paper mainly focuses on the time-frequency optimization of the multiple carrier system; the beamforming strategy can refer to some existing schemes in references \cite{luo2024isac,he2023full}. 

In addition, the algorithmic nature of optimization has not fundamentally changed. Whether it is the resource allocation, the design of power, or the binary optimisation of BD reflection phases, they can still be solved iteratively by the same BCD framework. Their subproblems can usually still be solved with the help of convex approximations (e.g. SDR, SCA, etc.) or analytical updates (e.g. gradient-based water-filling deformation). In other words, the overall idea of the algorithm remains the same as in the single-antenna case, with the main difference being that the system becomes a high-dimensional matrix variable and the corresponding constraints are more complex, but it does not change the structure of the optimization problem and the BCD iteration mechanism.




\begin{table}[t]
\centering
\caption{Default System Parameters}
\label{tab:default_system_params}
\resizebox{\linewidth}{!}{
\begin{tabular}{l l}
\hline
\textbf{System parameters} & \textbf{Value} \\
\hline
Transmit Power ($P_t$)                         & 0 dBm \\
Number of BDs ($K$)                      &50              \\
Carrier frequency  ($f_c$)                      & 28\,GHz        \\
 Number of subcarriers (\(N\))  & 128 \\  
The distance between BS and Target (\(D_{S,T}\))   & \(8\) $\mathrm{m}$ \\
   The distance between BS and User   (\(d_{B,U}\))  & \(10\) $\mathrm{m}$ \\
        The activity range of $K$ BDs around other devices (\(D_{k}\))& 0.1  -  0.5 $\mathrm{m}$ \\

      OFDM symbol duration   (\(T\))  & \(4.1470 \, \mathrm{us}\) \\  
      Backscatter coefficient   ($\alpha_k$)& 0.5\\ 
    Channel fading    ($H_{\alpha, \beta}(n,m)$)& Rayleigh fading\\ 
      
\hline
\end{tabular}
}
\end{table}

\section{Performance Evaluation} \label{performance}

In this section, we describe the experimental setup and evaluate the proposed ISAC system regarding the analyzed metrics, i.e., SMI and data rate. Specifically, we present the Pareto boundary under various parameters, analyze sensing and communication performances individually, and compare our proposed system against state-of-the-art benchmarks regarding the ISAC performance and cost.

\vspace{-0.3cm}
\subsection{Simulation Setup}

We consider a BD-assisted ISAC system comprising a BS, multiple BDs, a user, and a sensing target. In line with the system model described in Section \ref{System model}, the BS simultaneously serves as a transmitter and sensing receiver, transmitting OFDM samples to communicate with the user and to sense the target. The BDs, naturally distributed around typical ISAC entities such as the BS, user, or target, reflect the incoming OFDM signals within a certain activity range to these entities.
We adopt practical system parameters from existing studies \cite{10221890,10566596,bicua2019multicarrier} in the related scenario, which are summarized in Table \ref{tab:default_system_params}. In addition, to demonstrate the sensing and communication performance individually, we use the proposed algorithm to solve both the sensing-centric optimization problem (P1) defined in \eqref{P1} and use a similar algorithm where the difference is that it regards the SMI as the constraints to solve the communication-centric optimization problem (P2) defined in \eqref{P2}.

We simulate the proposed BD-assisted system with the optimization for RE (i.e., subcarrier), power, and phase of BDs, denoted as ``\textbf{Proposed, SPP}". This corresponds to the Scenario 1 defined in Section \ref{Backscatter Model}.
In addition, we also simulate a variant of our proposed system optimized solely for subcarrier and power, excluding the optimization of BDs' phase, labeled as ``\textbf{Proposed, SP}". This scheme works effectively in fast fading scenarios where precise BD operation control is unavailable, which corresponds to the Scenario 2 defined in Section \ref{Backscatter Model}. To demonstrate the superiority and effectiveness of the proposed system, we compare it with three benchmark schemes.
First, we include a state-of-the-art benchmark from reference \cite{bicua2019multicarrier}, labeled as ``\textbf{Ref.} \cite{bicua2019multicarrier}." Similar to our approach, this scheme analyzes the trade-off between sensing and data rates in an OFDM-based ISAC system. However, it differs by not considering BDs.
Second, we simulate a benchmark that uses RIS as the assisted device in the ISAC system, labeled as ``\textbf{RIS-based}. To ensure fairness, the RIS is equipped with $K$ reflection elements, which is the same to the total BD number. In addition, it still use the proposed algorithm that optimize the RE and power, while the phase for the RIS is optimized using the interior-point method, which is a commercial Matlab algorithm frequently used for RIS optimization \cite{mathworks_ris_cdl}. 
Third, a scheme that simply employs time division multiple access (TDMA) and frequency division multiple access (FDMA) for sensing and communication resource allocation without optimization, labeled as ``\textbf{TDMA/FDMA}," is established to specifically evaluate the effectiveness of our proposed optimization algorithm.
By systematically comparing our proposed system and algorithm against these three distinct benchmarks, we comprehensively demonstrate the superior performance and practical advantages of the proposed BD-assisted ISAC system.

\begin{figure}
\centering
\subfigure[The convergence of SMI.]{
    \includegraphics[width=0.8\linewidth]{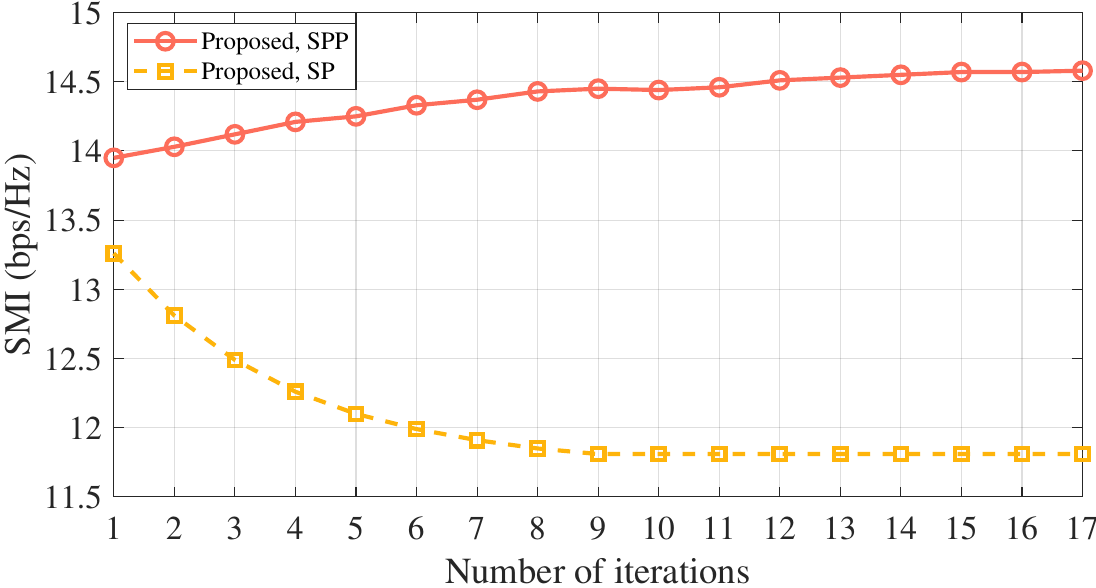} 
    \label{con1} 
    }  
\subfigure[The convergence of data rate.]{
    \includegraphics[width=0.8\linewidth]{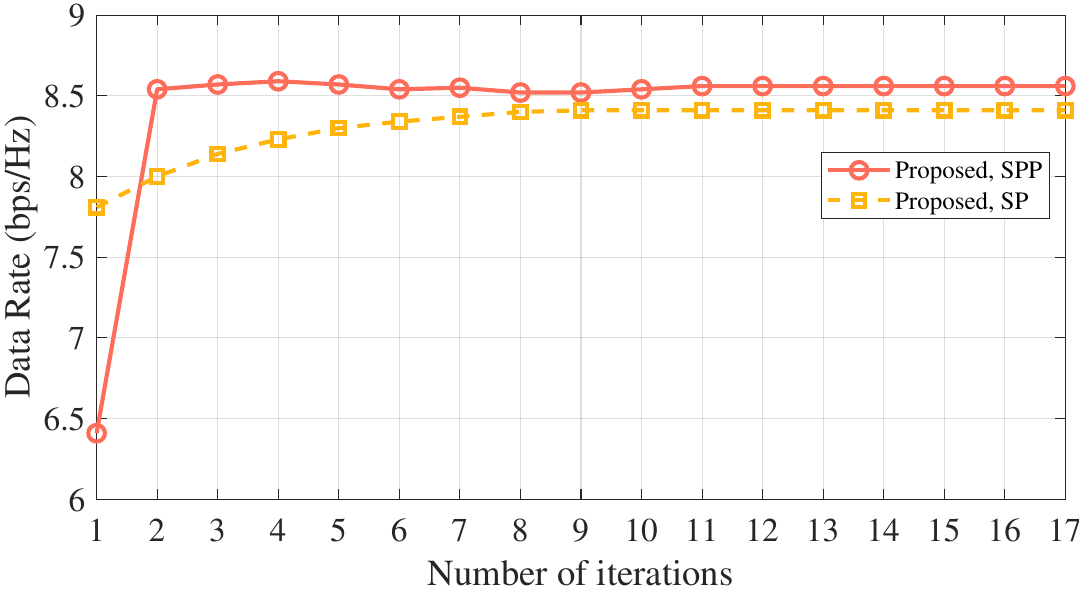}
    \label{con2}
    }
\caption{Convergence behaviors of the proposed algorithm.}
\label{con}
\vspace{-0.3cm}
\end{figure}

\vspace{-0.4cm}
\subsection{Simulation Results}

Fig. \ref{con} illustrates the convergence behavior of SMI and data rate for the proposed system, where the communication constraint $\Gamma_c$ and the sensing constraint $\Gamma_s$ are set to 14.5 bps/Hz and 8.5 bps/Hz, respectively. As observed from Fig. \ref{con1}, the proposed system's SMI quickly and monotonically converges to a stable value after approximately 6 iterations, demonstrating the algorithm's efficiency and its convergence. In contrast, the baseline without BDs phase optimization exhibits a lower sensing performance, demonstrating the effectiveness of the proposed system in effectively optimizing BDs modulation decisions.
Similarly, Fig. \ref{con2} shows the convergence of data rate that rapidly converges within a few iterations, indicating the convergence of the proposed algorithm in addressing the communication-centric problem. Both schemes, after an initial rapid increase, gradually converge to a stable value. These results confirm the rapid convergence of the proposed algorithm and the superior performance of the proposed ``\textbf{Proposed, SPP}" algorithm than the proposed ``\textbf{Proposed, SP}" algorithm.

\begin{figure*}[!tp]
\centering
\begin{minipage}[b]{0.32\linewidth}
  \centering
  \includegraphics[width=\linewidth]{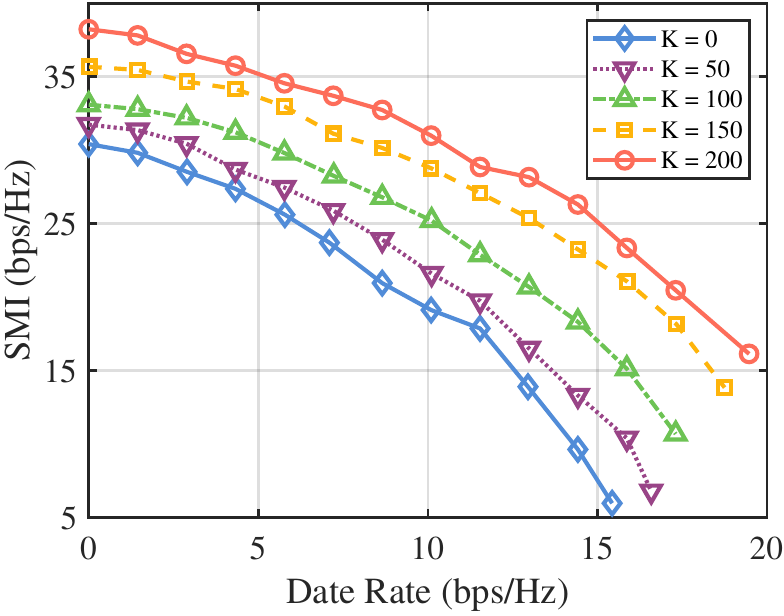}
  \vspace{-2em}
  \caption{Pareto boundaries for different number of BDs.}
  \label{1}
\end{minipage}
\hfill
\hspace{-3mm}
\begin{minipage}[b]{0.32\linewidth}
  \centering
  \includegraphics[width=\linewidth]{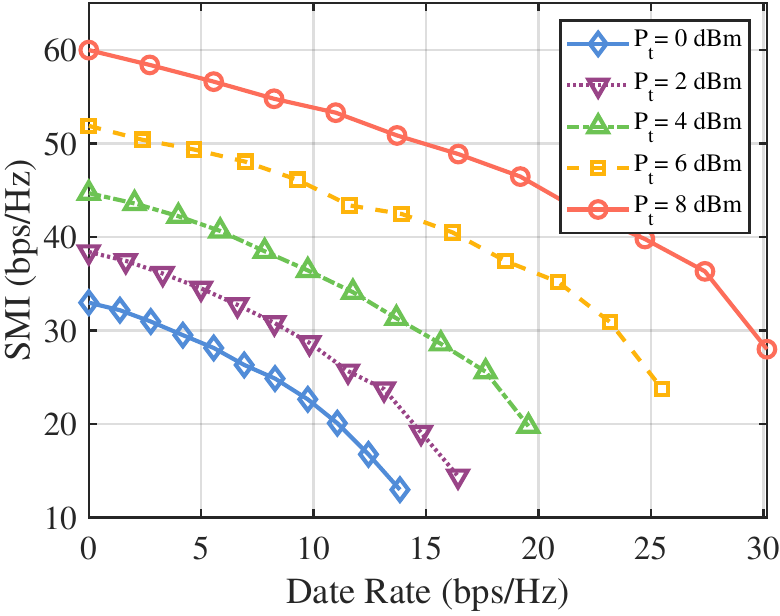}
    \vspace{-2em}
  \caption{Pareto boundaries for different power budgets.}
  \label{2}
\end{minipage}
\hfill
\hspace{-3mm}
\begin{minipage}[b]{0.32\linewidth}
  \centering
  \includegraphics[width=\linewidth]{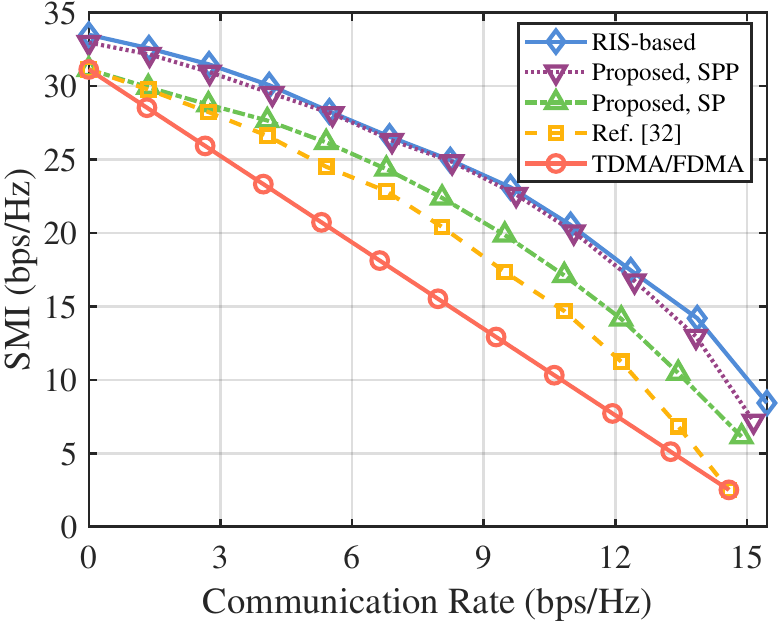}
    \vspace{-2em}
  \caption{Pareto boundaries for different schemes.}
  \label{3}
\end{minipage}
\vspace{-0.5cm}
\end{figure*}

\begin{figure}
\centering
\subfigure[SMI versus different transmit power.]{
    \includegraphics[width=0.8\linewidth]{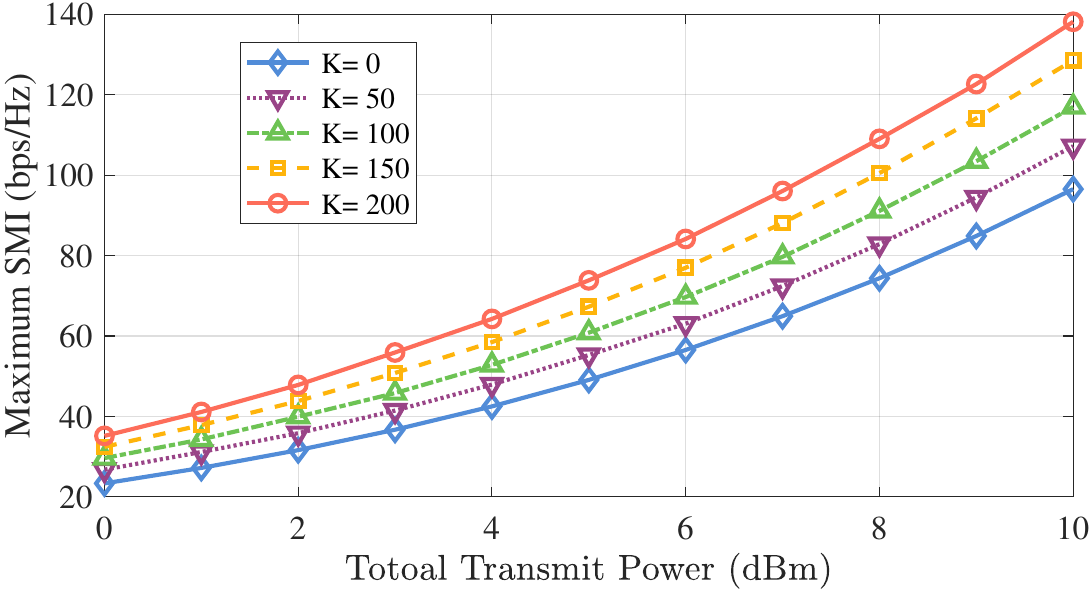} 
    \label{4.1} 
    }  
\subfigure[Data rate versus different transmit power.]{
    \includegraphics[width=0.8\linewidth]{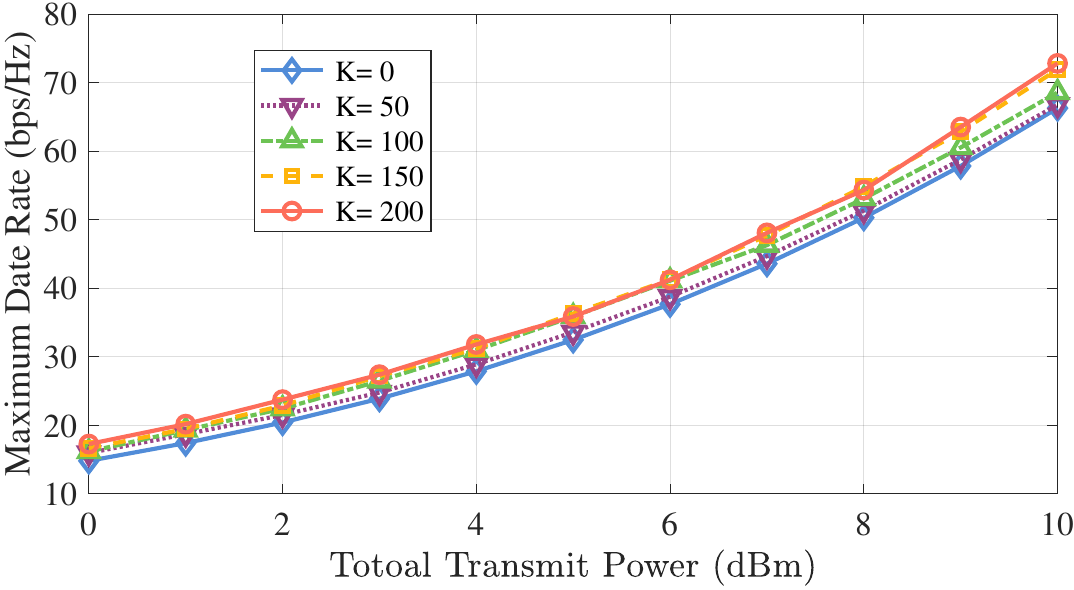}
    \label{4.2}
    }
\caption{Sensing and communication performance under various numbers of BDs.}
\label{4}
\vspace{-0.5cm}
\end{figure}

\begin{figure}
\centering
\subfigure[SMI vs. transmit power under various BD activity range]{
    \includegraphics[width=0.8\linewidth]{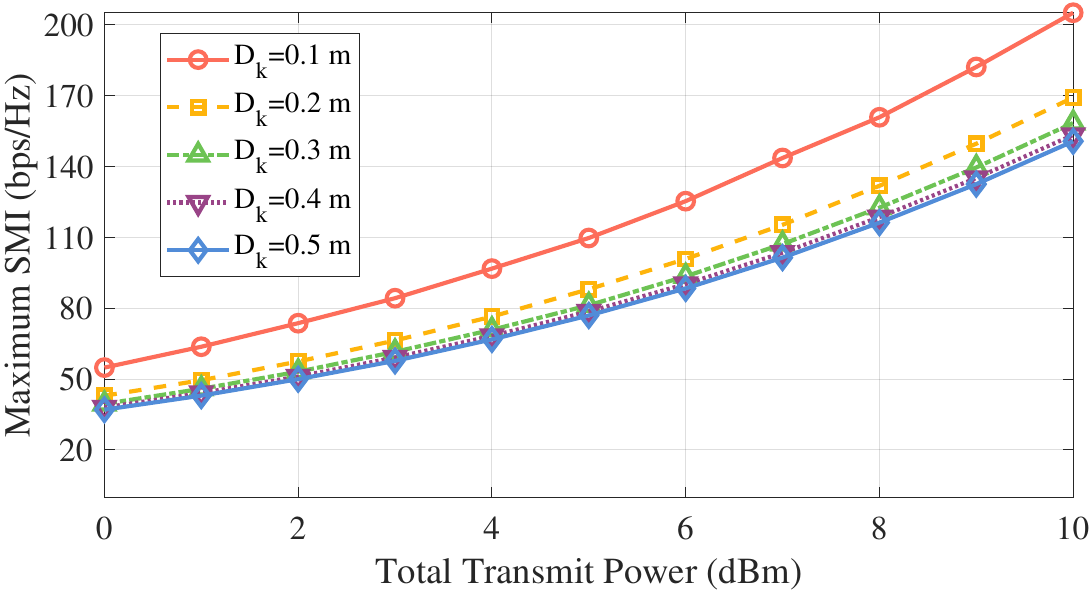} 
    \label{5.1} 
    }  
\subfigure[Data rate vs. transmit power under various BD activity range]{
    \includegraphics[width=0.8\linewidth]{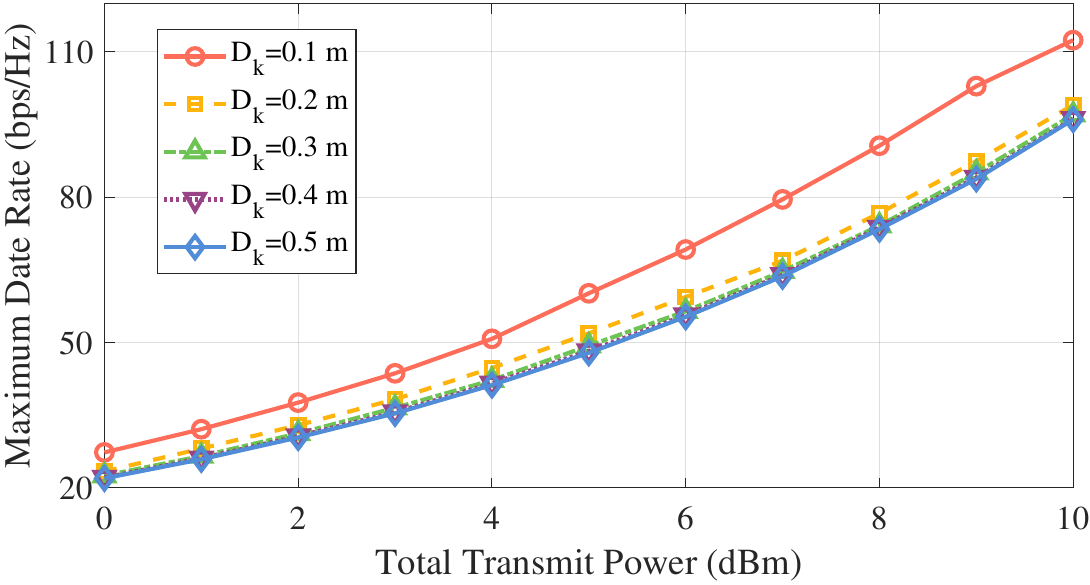}
    \label{5.2}
    }
\caption{Sensing and communication performance under various BD activity ranges.}
\label{PLA_BC}
\vspace{-0.5cm}
\end{figure}
Fig. \ref{1} illustrates the Pareto boundary showing the trade-off between the SMI and the data rate for different numbers of BDs. As observed, the proposed BD-assisted ISAC system consistently achieves improved performance with the growth of the number of
BDs. Specifically, increasing the number of BDs from 0 to 200 substantially enhances the achievable SMI for any given data rate, demonstrating the effectiveness of utilizing backscattered signals from additional BDs. Furthermore, the SMI gradually decreases as the required data rate increases, which clearly indicates the inherent trade-off in the joint allocation of limited resources between sensing and communication functions.



Fig. \ref{2} illustrates the impact of varying transmit power budgets on the Pareto boundary between SMI and data rate. As clearly shown, increasing the transmit power from 0 dBm to 8 dBm significantly enhances the achievable SMI for a given communication requirement. This result indicates that higher transmit power improves the reflected signal strength, thus directly augmenting sensing capabilities. Additionally, it is evident that as the data rate constraint increases, the achievable SMI gradually declines, demonstrating the inherent resource allocation trade-off between sensing and communication functionalities in the proposed BD-assisted ISAC system.


Fig.  \ref{3} compares the Pareto boundary of the proposed BD-assisted ISAC system against various benchmark schemes. It is clearly shown that the proposed system significantly outperforms all benchmarks. Specifically, it is approximately 15\% improvements in both the SMI and the data rate between ``\textbf{Proposed, SPP}" and ``\textbf{Ref. [32]}".
This performance advantage arises from the effective utilization of communication subcarriers for significant gains and the reflected signals from backscatter devices, thereby enhancing both sensing and communication capabilities simultaneously. The results further highlight that introducing BDs and jointly optimizing resource allocation considerably improve system performance compared to the benchmarks. In addition, although the "\textbf{RIS-based}" benchmark achieves a marginal performance improvement compared to the "\textbf{Proposed, SPP}" method, however, this gain is limited (less than 5
\%) and necessitates a more delicate RIS deployment, introducing additional costs in hardware and labor.

Fig. \ref{4} illustrates the impact of transmit power on the sensing and communication performance for varying numbers of BDs. In Fig. \ref{4.1}, it is clearly demonstrated that increasing transmit power consistently enhances the achievable SMI, particularly when more BDs are deployed. For instance, increasing the number of BDs from 0 to 200 results in a substantial improvement in SMI at all power levels, emphasizing the effectiveness of BDs in augmenting sensing performance via additional reflected signal paths.
Fig. \ref{4.2} shows the corresponding communication performance. Similarly, the data rate increases as the transmit power grows. 
These results collectively highlight the advantage of utilizing BDs in enhancing sensing capability significantly while maintaining stable data rate growth with increased transmit power.


Fig. \ref{5.1} illustrates how the SMI varies with transmit power budgets for different BD activity ranges around other devices, i.e., the BS, the target, or the user. Clearly, placing BDs closer to the BS substantially improves the SMI at identical power levels, since shorter distances between BDs and BS reduce signal reflection loss and directly enhance sensing performance.
Fig. \ref{5.2} illustrates the corresponding data rate under similar conditions. Similar to the sensing results, positioning BDs closer to devices further enhances communication performance. This is because the shorter signal propagation distance reduces the path loss. These observations emphasize the significance of strategically placing BDs closer to basis ISAC entities to maximize overall ISAC system performance.

\begin{figure*}[!tp]
\centering
\begin{minipage}[b]{0.33\linewidth}
  \centering
  \includegraphics[width=\linewidth]{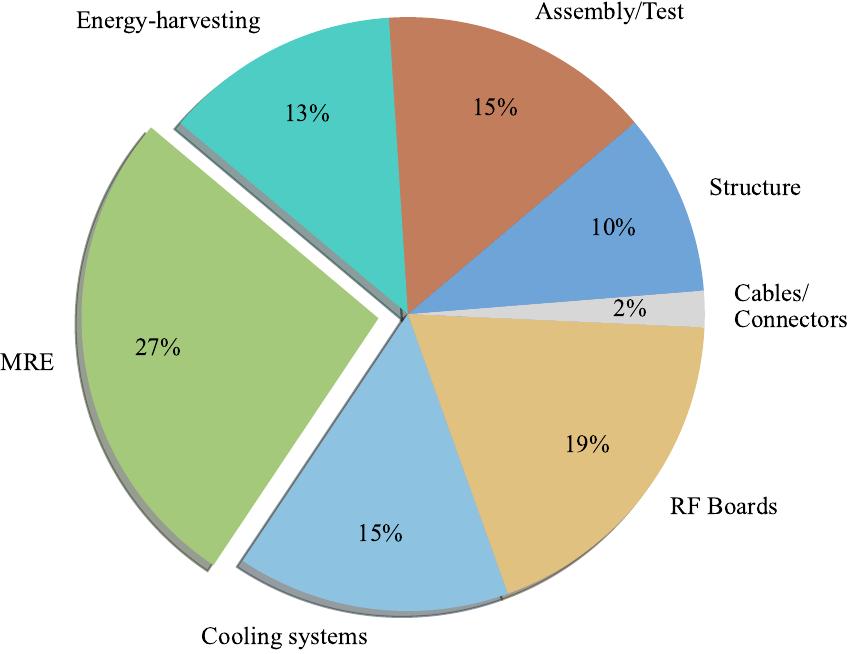}
  \vspace{-2em}
  \caption{Cost distribution of a RIS.}
  \label{Cost1}
\end{minipage}
\hfill
\hspace{-9mm}
\begin{minipage}[b]{0.33\linewidth}
  \centering
  \includegraphics[width=\linewidth]{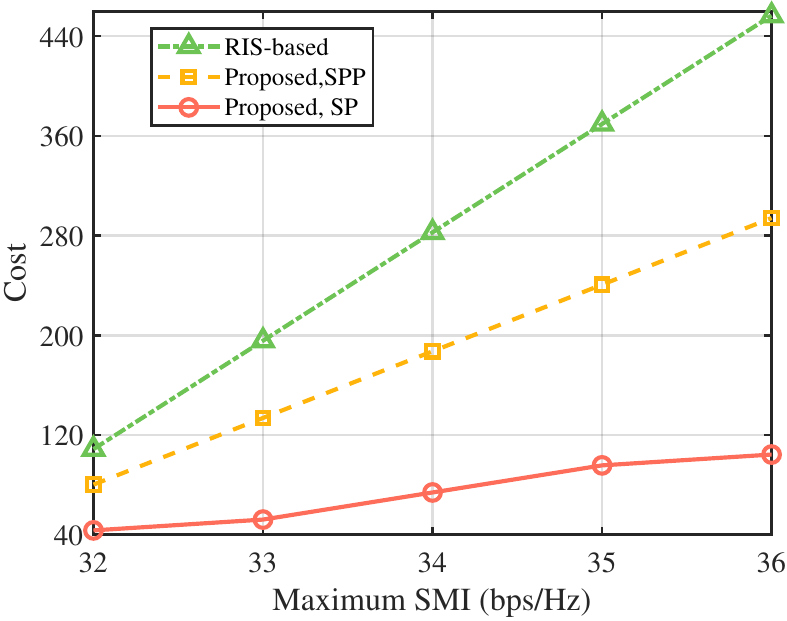}
    \vspace{-2em}
  \caption{Hardware const vs. maximum SMI}
  \label{Cost2}
\end{minipage}
\hfill
\hspace{-9mm}
\begin{minipage}[b]{0.33\linewidth}
  \centering
  \includegraphics[width=\linewidth]{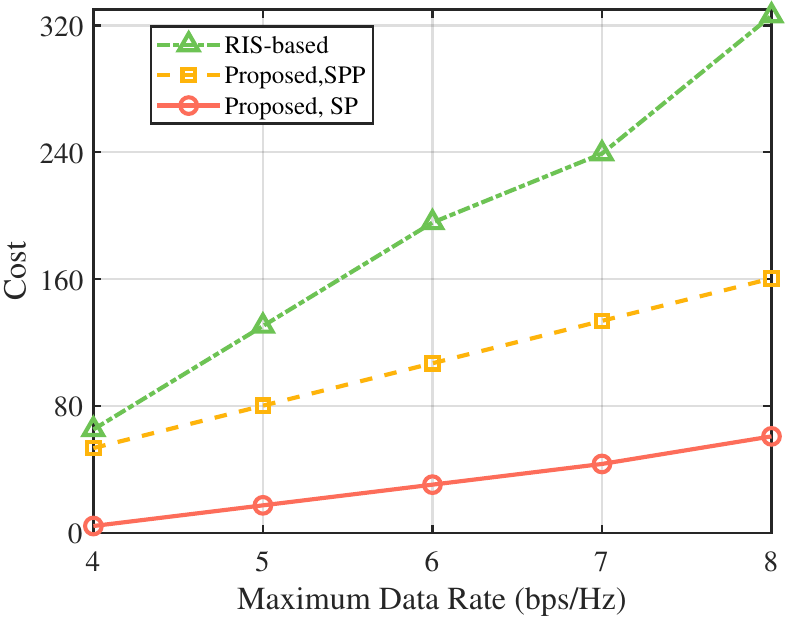}
    \vspace{-2em}
  \caption{Hardware const vs. maximum data rate}
  \label{Cost3}
\end{minipage}
\vspace{-0.3cm}
\end{figure*}

At least, we approximately evaluate the cost performance between the RIS-assisted ISAC system and our proposed system.
RIS consists of dense arrays of passive metasurface unit cells, which interface with a central controller and RF interconnects. As shown in Fig. \ref{Cost1}, the single-unit bill of materials can be normalised as metamaterial radiation element (MRE), cooling system, RF boards, cables/connectors, mechanical structure, assembly \& test, and energy-harvesting module \cite{9696209,
rossanese2023data}.  
When the same aperture is implemented as an BD, the metamaterial element is replaced by a simple antenna, the entire cooling block is removed, and the assembly \& test share is reduced to $2\,\%$.  All other blocks (RF boards, cables, structure, energy harvesting) are retained, so the remaining share sums to $50\,\%$. Let $C_0$ be the hardware cost of one metasurface element and let
$c\,(2\le c\le 10)$ be the cost ratio between a metasurface element and a simple antenna.  Then $Cost_{\mathrm{RIS}}=\frac{100}{23}\,N\,C_{0}$,  $Cost_{\mathrm{BD,SPP}}=\Bigl(\frac{1}{c}+\frac{50}{23}\Bigr)\,N\,C_{0}$, where $N$ denotes the number of elements (or BDs) under consideration. Since BDs are typically semi-passive under the SPP scheme, we set $c = 2$. For convenience, we set $C_0=1$ in numerical examples. In addition, for the SP scheme, since it completely leverages BDs already distributed in the environment, its hardware cost can be eliminated. We only consider its small assembly/test cost for approximately  $7\%$, i.e., $Cost_{\mathrm{BD,SP}}=\Bigl(\frac{7}{23}\Bigr)\,N\,C_{0}$. From Fig. \ref{Cost2} and \ref{Cost3}, it can be seen that the cost of our proposed system is significantly lower than the RIS-based scheme in the same maximum SMI and maximum data rate level, approximately $50\%$ saving in the ``\textbf{Proposed, SPP}" scheme and $80\%$ saving in the ``\textbf{Proposed, SP}" scheme. This indicates that our proposed system provides a powerful solution to reduce the hardware cost while guaranteeing a high sum-rate in practice.

\vspace{-0.1cm}
\section{Conclusion}\label{Conclusion}

This paper proposed a novel BD-assisted ISAC system. To illustrate the inherent trade-offs between sensing and communication, a Pareto optimization framework was established, formulating dual optimization problems to maximize either SMI or communication rates under mutual constraints, considering multiple resource optimization on time-frequency RE allocation, power allocation, and BD modulation decision.
To effectively address the non-convex optimization challenges, a BCD algorithm was developed, where the SCA method is proposed to solve the RE allocation problem, the augmented Lagrangian method combined with the water-filling method is proposed to solve the power allocation problem, and the SDR method is proposed to solve the BD modulation decision problem. Furthermore, we discussed the adaptability of our proposed system in bistatic ISAC and MIMO systems. Extensive simulation results demonstrated the performance of the proposed system across diverse system conditions and confirmed its notable performance improvements over traditional ISAC schemes and other benchmark methods. These results highlight the potential of leveraging BDs in practical ISAC systems for performance enhancement.

\appendices

\section{Proof of Compactness and Normality of the Achievable Performance Region} \label{appendice1}

In this appendix, we aim to prove that the achievable performance region $\mathcal{R}$ of the proposed system, defined in \eqref{region}, is both compact and normal

\vspace{-0.3cm}
\subsection{Compactness} 

A set in $\mathbb{R}_+^2$ is compact if it is closed and bounded \cite{chen2021joint}. We show each property in turn.

Since the allocated power for sensing and communication is constrained by a total transmit power budget $P_t$, all feasible power allocations satisfy:
\begin{align}
0 \leq P_c(m,n) + P_s(m,n) \leq P_t, \quad \forall m, n.
\end{align}
And we further have 
\begin{align}
0 \leq P_c(m,n) \leq P_t, \quad 0 \leq P_s(m,n) \leq P_t, \quad \forall m, n.
\end{align}

In addition, for subcarrier allocation, we also have:
\begin{align}
\mathcal{N}_{r,m} \cup \mathcal{N}_{c,m} = \mathcal{N}, \;\mathcal{N}_{r,m}\cap \mathcal{N}_{c,m}=\emptyset, \quad \forall m, n.
\end{align}
Therefore, all feasible points $(I_s,C_d)\in \mathcal{R}$ arise from a finite power budget $P_t$ and a finite set of subcarriers $\mathcal{N}$ with maximum per-resource-element power constraint $P_{\text{max}}$. Hence, there exists an intrinsic upper bound on the maximum power usage. Since $I_s$ is monotonically non-decreasing with transmit power devoted to radar subcarriers and $C_d$ is monotonically non-decreasing with transmit power devoted to communication subcarriers, both $I_s$ and $C_d$ are upper-bounded by the maximum feasible power allocation (i.e., $P_t$ or total subcarrier usage). Thus, there exist finite constants $I_s^{\max}$ and $C_d^{\max}$ such that 
\begin{equation}
0 \;\le\; I_s \;\le\; I_s^{\max}, 
\quad\text{and}\quad 
0 \;\le\; C_d \;\le\; C_d^{\max},
\end{equation}
for all $(I_s,C_d)\in \mathcal{R}$. Consequently, $\mathcal{R}$ is bounded in $\mathbb{R}_+^2$.

Furthermore, to show that $\mathcal{R}$ is closed, consider any convergent sequence $\bigl\{(I_r^{(i)},\,C_d^{(i)})\bigr\}_{i=1}^\infty \subseteq \mathcal{R}$.
Each pair $(I_r^{(i)},C_d^{(i)})$ is generated by some feasible finite-dimensional set of variables $\bigl\{ P_{n,m}^{(i)},\,\phi_{k,m}^{(i)},\,\mathcal{N}_{r,m}^{(i)} \bigr\}$. Since (i)~$P_{n,m}^{(i)} \in [0, P_{\max}]$ and $\sum_{n,m}P_{n,m}^{(i)}\leq P_t$, (ii)~$\phi_{k,m}^{(i)}\in \{0,\pi\}$, (iii)~$\mathcal{N}_{r,m}^{(i)}$ and $\mathcal{N}_{c,m}^{(i)}$ each come from a finite set of possibilities, the space of such resource assignments is compact being the finite Cartesian product of intervals or discrete sets. Consequently, there exists a convergent subsequence whose resource allocations converge to a limit optimal vector $\bigl\{P_{n,m}^*, \phi_{k,m}^*, \mathcal{N}_{r,m}^*\bigr\}$, which still must satisfy the same constraints. By the continuity of $I_r$ and $C_d$ with respect to the channels, power allocation, and phases, it follows that
\begin{equation}
\lim_{i\to \infty} (I_r^{(i)}, C_d^{(i)}) 
\;=\; (I_r^*, C_d^*),
\end{equation}
and $(I_r^*, C_d^*)$ continues to satisfy all constraints, thus $(I_r^*,C_d^*)\in \mathcal{R}$. Hence, every limit point of $\mathcal{R}$ remains in $\mathcal{R}$, proving $\mathcal{R}$ is closed.

\subsection{Normality}

The set $\mathcal{R}\subseteq\mathbb{R}^2_+$ is normal if, for any point $(I_r, C_d) \in \mathcal{R}$ that for $(I_r',C_d')$ satisfying $0 \le I_r' \le I_r$ and $0 \le C_d' \le C_d$, it must hold that $(I_r', C_d') \in \mathcal{R}$.

To prove that $\mathcal{R}$ is normal, suppose we have a point $(I_s,\, C_d)\in \mathcal{R}$. By definition, certain power allocations, subcarrier allocations, and BD modulation decisions exist, such that the resulting SMI is $ I_r$ and the data rate is $C_d$. Now consider any pair $(I_s', C_d')$ with
\begin{equation}
0 \;\le\; I_s' \;\le\; I_s, 
\quad\text{and}\quad 
0 \;\le\; C_d' \;\le\; C_d.
\end{equation}
We can always obtain $(I_s',C_d')$ by, for instance, reducing the power on specific subcarriers or subcarrier sets to degrade the SMI or the communication rate intentionally. For example, if we only partially use or randomly modulate some subcarriers (thereby lowering the effective energy for radar sensing), $I_s$ can be reduced to $I_s'$. Similarly, by disabling or reducing power on the communication subcarriers, $C_d$ can be reduced to $C_d'$.
Such partial or “throttled” usage still satisfies all feasibility constraints (e.g., total power remains $\le P_t$, subcarrier assignments remain valid, BD phase shifts remain feasible). Therefore, $(I_s',C_d')$ remains in the feasible region, implying $\mathcal{R}$ is downward closed in the positive quadrant. Hence, $\mathcal{R}$ is a normal set.

Combining the above, we conclude that $\mathcal{R}$ is both compact and normal.

\vspace{-0.3cm}
\section{Proof of Lemma III.2} \label{appendice2}

First, for a given $\Gamma_c$, the optimal solution to \eqref{P1} that maximizes $I_r$ in \eqref{eq:obj} subject to $C_d \ge \Gamma_c$ in \eqref{eq:comm_rate} cannot be strictly outperformed in both objectives, due to the normality and compactness of the feasible region proved in Lemma~1. Hence, it belongs to the Pareto boundary. Second, any Pareto-optimal point $(I_r^*, C_d^*)$ is retrievable by setting $\Gamma_c = C_d^*$. If there existed $(I_r', C_d')$ with $I_r' \ge I_r^*$ and $C_d' \ge C_d^*$, it would contradict the Pareto-optimality of $(I_r^*, C_d^*)$. A similar argument holds for \eqref{P2}, where a sensing constraint $I_r \ge \Gamma_s$ \eqref{14b} is imposed to maximize $C_d$ \eqref{14a}. Finally, by varying $\Gamma_c$ (or $\Gamma_s$) over the entire feasible range, one enumerates all optimal solutions, thereby obtaining the full Pareto boundary.

\section{Derivation of \eqref{gradient updates1} and \eqref{gradient updates2}} \label{appendice3}

From \eqref{eq:Ir_again}, Recall
\begin{equation}
I_r(m,n) \;=\; \Delta f \,\log_2\!\Bigl(1 \;+\; \tfrac{|\alpha_t\,G\,G^H|^2\,P(m,n)}{\sigma_r^2}\Bigr), \label{AppC-1}
\end{equation}

Since $\log_2(1 + x) \;=\;\frac{\ln(1 + x)}{\ln 2}$, \eqref{AppC-1} can be reformulated as

\begin{equation}
I_r(m,n) 
\;=\;
\Delta f \,\frac{1}{\ln 2}\,\ln\!\Bigl(1 + \tfrac{|\alpha_t\,G\,G^H|^2\,P(m,n)}{\sigma_r^2}\Bigr).\label{AppC-2}
\end{equation}

Define $x \;=\; \frac{|\alpha_t\,G\,G^H|^2\,P(m,n)}{\sigma_r^2}$. Then, we have
\begin{equation}
\frac{d}{dP(m,n)}\,\bigl[\ln(1 + x)\bigr]
\;=\;
\frac{1}{1 + x}\;\frac{d x}{dP(m,n)}. \label{AppC-3}
\end{equation}
where 
$\frac{d x}{dP(m,n)}
\;=\;
\frac{|\alpha_t\,G\,G^H|^2}{\sigma_r^2}.$

Substituting \eqref{AppC-3} into \eqref{AppC-2}, we have
\begin{align}
\frac{\partial I_r(m,n)}{\partial P(m,n)}
&=\;
\Delta f \,\frac{1}{\ln 2}
\;\frac{1}{1 + \tfrac{|\alpha_t\,G\,G^H|^2P(m,n)}{\sigma_r^2}}
\;\frac{|\alpha_t\,G\,G^H|^2}{\sigma_r^2} \nonumber
\\[4pt]
&=\;
\Delta f\,\frac{1}{\ln 2}\,
\frac{|\alpha_t\,G\,G^H|^2}{\sigma_r^2 + |\alpha_t\,G\,G^H|^2\,P(m,n)},
\end{align}
which matches the expression in \eqref{gradient updates1} of the main text. Following a similar prceudre, we also can derive

\begin{align}
\frac{\partial C_d(m,n)}{\partial P(m,n)}
&=\;
\Delta f\,\frac{1}{\ln 2}\,
\frac{|H_c|^2}{\sigma_c^2 + |H_c|^2\,P(m,n)},
\end{align}
which is precisely \eqref{gradient updates2} in the main text.

\vspace{-0.3cm}


\bibliographystyle{IEEEtran}
\bibliography{IEEEabrv,reference}

\end{document}